\begin{document}
\setlength{\oddsidemargin}{0cm} \setlength{\evensidemargin}{0cm}

\theoremstyle{plain}

\newtheorem{theorem}{Theorem}[section]
\newtheorem{prop}[theorem]{Proposition}
\newtheorem{defn}{Definition}[section]
\newtheorem{lemma}[theorem]{Lemma}
\newtheorem{coro}[theorem]{Corollary}
\newtheorem{claim}{Claim}[section]
\newtheorem{remark}[theorem]{Remark}
\newtheorem{conjecture}{Conjecture}
\newtheorem{exam}[theorem]{Example}
\newtheorem{assumption}[theorem]{Assumption}
\newtheorem{condition}[theorem]{Condition}

\title{Isotropic ideals of metric $n$-Lie algebras}

\author{Ruipu Bai}
\address{College of Mathematics and Computer Science, Hebei University, Baoding
071002, P.R. China}\email{bairp1@yahoo.com.cn}

\author{Wanqing Wu}
\address{College of Mathematics and Computer Science, Hebei University, Baoding
071002, P.R. China} \email{wuwanqing8888@126.com}

\author{Zhiqi Chen}
\address{School of Mathematical Sciences and LPMC, Nankai University,
Tianjin 300071, P.R. China} \email{chenzhiqi@nankai.edu.cn}

\def\shorttitle{Isotropic ideals of metric $n$-Lie algebras}

\subjclass[2000]{17B05, 17D99}

\keywords{Metric n-Lie algebra, isotropic ideal, isomaximal ideal}

\begin{abstract}
In this paper, we give a systematic study on isotropic ideals of
metric $n$-Lie algebras. As an application, we show that the center
of a non-abelian $(n+k)$-dimensional metric $n$-Lie algebra ($2\leq
k\leq n+1$), whose center is isotropic, is of dimension $k-1$.
Furthermore, we classify $(n+k)$-dimensional metric $n$-Lie algebras
for $2\leq k\leq n+1$.
\end{abstract}

\maketitle


\baselineskip=18pt

\section{Introduction}

The notion of $n$-Lie algebras was introduced by Filippov \cite{1}
in 1985. It is strongly connected with many other fields, such as
dynamics, geometry and string theory. Indeed, motivated by some
problems of quark dynamics, Nambu \cite{2} introduces a ternary
generalization of Hamiltonian dynamics by means of the ternary
Poisson bracket
$$ [f_1, f_2, f_3]= \det\Big(\frac{\partial f_i}{\partial x_j}\Big).$$
Recently, a class of $n$-Lie algebras attract more and more
attention. They are called metric $n$-Lie algebras, which are
$n$-Lie algebras possessed invariant non-degenerate symmetric
bilinear forms.

 Metric $n$-Lie algebras have arisen
 in the work of Figueroa-O'Farrill and Papadopoulos
\cite{3} in the classification of maximally supersymmetric type IIB
supergravity backgrounds \cite{4}, and in the work of Bagger and
Lambert \cite{5,6} and Gustavsson \cite{7} on a superconformal field
theory for multiple M2-branes. But very few metric $n$-Lie algebras
admit positive definite metrics (see \cite{8,9}). In fact,
Ho-Hou-Matsuo \cite{10} confirmed that there are no $n$-Lie algebras
with positive definite metrics for $n = 5, 6, 7, 8$. Some messages
in physics motivate us to study $n$-Lie algebras with invariant
non-degenerate symmetric bilinear forms, such as the correspond
between some dynamical systems involving zero-norm generators and
gauge symmetries and negative-norm generators corresponding to
ghosts (see \cite{11,12,13}).

As we know, there is some progress on the structures of metric
$n$-Lie algebras, such as the metric dimension of metric $n$-Lie
algebras \cite{14}, Lorentzian metric $n$-Lie algebras \cite{15},
the double extension of $n$-Lie algebras \cite{16} and the
classification of index-$2$ metric $3$-Lie algebras \cite{12}. But
there are many problems unsolved. In particular, the classification
of metric $n$-Lie algebras is also open. In this paper, we mainly
concern isotropic ideals of metric $n$-Lie algebras. Furthermore
based on the results about isotropic ideals, we classify
$(n+k)$-dimensional metric $n$-Lie algebras for $2\leq k\leq n+1$.

The paper is organized as follows. In Section 2, we list some
fundamental notions. In Section 3, we study isotropic ideals of
metric $n$-Lie algebras. In Section 4, we  classify
$(n+k)$-dimensional metric $n$-Lie algebras for $2\leq k\leq n+1$.

Throughout this paper, all $n$-Lie algebras are of finite dimension
and over the complex field $\mathbb C$. Obvious proof are omitted.

\section{Fundamental notions}

\subsection{Basic notions on $n$-Lie algebras.}
An $n$-Lie algebra is a vector space $\mathfrak g$ over a field
$\mathbb C$ equipped with an $n$-multilinear operation $[x_1,
\cdots, x_n]$ satisfying
$$ [x_1, \cdots, x_n] = sign(\sigma)[x_{\sigma (1)}, \cdots, x_{\sigma(n)}], \eqno(2.1) $$  $$
  [[x_1, \cdots, x_n], y_2, \cdots, y_n]=\sum_{i=1}^n[x_1, \cdots, [ x_i, y_2, \cdots, y_n], \cdots, x_n]\eqno(2.2) $$
for any $x_1, \cdots, x_n, y_2, \cdots, y_n\in {\mathfrak g}$ and
$\sigma\in S_n$, the permutation group on $n$ letters. The
subalgebra $[{\mathfrak g}, \cdots, {\mathfrak g}]$ generated by all
vectors $[x_1, \cdots, x_n]$ for any $x_1, \cdots, x_n\in {\mathfrak
g}$ is called the derived algebra of $ {\mathfrak g}$, denoted by $
{\mathfrak g}^1$. If ${\mathfrak g}^1 = {\mathfrak g}$, then
${\mathfrak g}$ is called a perfect $n$-Lie algebra.

An ideal $I$ of an $n$-Lie algebra ${\mathfrak g}$ is called a
solvable ideal, if $I^{(r)} = 0 $ for some $r \geq 0$, where
$I^{(0)} = I$ and $I^{(s+1)} = [I^{(s)}, \cdots, I^{(s)}]$ for $s >
0$ by induction. If $[I, I, {\mathfrak g}, \cdots, {\mathfrak
g}]=0,$ then $I$ is called an abelian ideal. The maximal solvable
ideal of ${\mathfrak g}$ is called the solvable radical, denoted by
${\mathfrak r}$.

An $n$-Lie algebra ${\mathfrak g}$ is said to be simple if it has no
proper ideals and $\dim {\mathfrak g}^1 > 0$. If ${\mathfrak g}$ is
the direct sum of simple ideals, then ${\mathfrak g}$ is strong
semisimple (see \cite{18}). By \cite{19}, there exists only one
finite dimensional simple $n$-Lie algebra ${\mathfrak g}$ over
$\mathbb C$, which is the perfect $(n+1)$-dimensional $n$-Lie
algebra; and every finite dimensional $n$-Lie algebra ${\mathfrak
g}$ has a Levi-decomposition
 $${\mathfrak g}={\mathfrak s}\oplus {\mathfrak r},$$ where ${\mathfrak s}$ is a strong semisimple
subalgebra and ${\mathfrak r}$ is the radical of ${\mathfrak g}$.

For a given subspace $W$ of an  $n$-Lie algebra ${\mathfrak g}$, the
subalgebra
$$
 C_{\mathfrak g}(W)=\{x\in {\mathfrak g} \mid [x,
W, {\mathfrak g}, \cdots, {\mathfrak g}]=0\}
$$
 is called the centralizer of $W$ in ${\mathfrak g}$. The
 centralizer of ${\mathfrak g}$, i.e., the center of ${\mathfrak g}$, is denoted by $C({\mathfrak g})$. If $I$ is an ideal of
 ${\mathfrak g}$, so is $C_{{\mathfrak g}}(I)$.

An $n$-Lie module of an $n$-Lie algebra ${\mathfrak g}$ is a vector
space $M$ with a multilinear skew-symmetric mapping~$\rho:{\mathfrak
g}^{\wedge {(n-1)}} \longrightarrow End (M)$ ~ satisfying
$$
[\rho(a),
\rho(b)]=\rho(a)\rho(b)-\rho(b)\rho(a)=\sum\limits_{i=1}^{n-1}\rho(b_1,
\cdots, [a_1, \cdots, a_{n-1}, b_i], \cdots, b_{n-1}),
$$
$$
\rho([a_{1}, \cdots, a_{n}], b_{2}, \cdots, b_{n-1})=
$$
$$
\sum\limits_{i=1}^{n}(-1)^{n-i}\rho(a_{1}, \cdots, a_{i-1}, a_{i+1},
\cdots, a_{n})\rho(a_i, b_2, \cdots, b_{n-1}),
$$
where $ a=(a_{1}, \cdots, a_{n-1})$,  $ b=(b_1, \dots, b_{n-1})\in
{\mathfrak g}^{\wedge (n-1)}$ (for more details see
\cite{20,21,22}). The regular representation on ${\mathfrak g}$ is
defined by
 $$
 a\mapsto ad(a), \mbox{ for }  a=(a_1, \cdots a_{n-1})\in  {\mathfrak g}^{ \wedge (n-1)}.
 $$

\subsection{Basic facts on metric $n$-Lie algebras.}
A metric $n$-Lie algebra $({\mathfrak g}, B)$ is an $n$-Lie algebra
${\mathfrak g}$ with non-degenerate symmetric bilinear form $B$,
satisfying
$$B([x_{1}, \cdots, x_{n-1}, y_{1}], y_{2})=-B([x_{1}, \cdots, x_{n-1}, y_{2}], y_{1}),
~\forall x_{1},\cdots, x_{n-1}, y_1,  y_{2}\in  {\mathfrak
g}.\eqno(2.3)$$ Such a bilinear form $B$ is called a metric on
${\mathfrak g}$.

Let $W$ be a subspace of a metric $n$-Lie algebra $({\mathfrak g},
B)$. The orthogonal complement of $W$ is
$$
    W^{\bot}=\{x \in  {\mathfrak g} \mid B(w, x)=0,~\mbox{for any} ~ w\in W\}.
$$
 Then ${\mathfrak g}^1=[{\mathfrak g},
\cdots, {\mathfrak g}]=C({\mathfrak g})^{\bot}$. The subspace $W$ is
called non-degenerate if $B|_{W \times W}$ is non-degenerate, that
is $W\cap W^{\bot}=0$. If $W$ is an ideal, then $W^{\bot}$ is also
an ideal, and $W$ is a minimal ideal if and only if $W^{\bot}$ is
maximal. A subspace $W$ is isotropic (coisotropic) if $W \subseteq
W^{\bot}$ ($W^{\bot} \subseteq W$).

A metric $n$-Lie algebra ${\mathfrak g}$ is  decomposable if there
exist non-degenerate ideals $I$ and $J$, such that ${\mathfrak
g}=I\oplus J$. Otherwise, ${\mathfrak g}$ is indecomposable. An
equivalent criterion for decomposability is the existence of a
proper non-degenerate ideal: if $I$ is a non-degenerate ideal,
${\mathfrak g}=I\oplus I^{\bot}$ is an orthogonal direct sum of
ideals.

\begin{defn} Let $({\mathfrak g},  {B})$ be a metric $n$-Lie algebra
and ${I}$ an ideal of ${\mathfrak g}$. If $ {I}$ is isotropic and is
not contained in (does not contain) any isotropic ideal, then ${I}$
is called an isomaximal ideal (an isominimal ideal) of ${\mathfrak
g}$.
\end{defn}

\begin{lemma} Let $({\mathfrak g}, B)$ be a metric $n$-Lie
algebra and $V$, $W$ subspaces of ${\mathfrak g}$. Then $\dim G=\dim
W+\dim W^{\bot}$ and
$$ V \subseteq W
\Leftrightarrow V^{\bot} \supseteq W^{\bot},\quad V^{\bot} \cap
W^{\bot} \subseteq (V + W)^{\bot},
$$
$$
(V \cap W)^{\bot} \supseteq V^{\bot} +
W^{\bot},\quad(V^{\bot})^{\bot}=V.
$$
\end{lemma}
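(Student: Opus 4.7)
The plan is to reduce everything to standard facts about annihilators in the dual space, exploiting the fact that $B$ is non-degenerate. Specifically, I would introduce the linear map $\phi : \mathfrak{g} \to \mathfrak{g}^{*}$ given by $\phi(x)(y) = B(x,y)$. Since $B$ is non-degenerate, $\phi$ is a linear isomorphism; and under $\phi$ the orthogonal complement $W^{\bot}$ corresponds exactly to the usual annihilator $W^{0} = \{f \in \mathfrak{g}^{*} \mid f(W) = 0\}$. Once this identification is in place, every claim in the lemma becomes either a definition-chase or a standard annihilator statement.

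First I would prove the dimension identity $\dim \mathfrak{g} = \dim W + \dim W^{\bot}$ by noting $\dim W^{\bot} = \dim W^{0} = \dim \mathfrak{g}^{*} - \dim W = \dim \mathfrak{g} - \dim W$. Next I would dispatch $(V^{\bot})^{\bot} = V$: the inclusion $V \subseteq (V^{\bot})^{\bot}$ is immediate from the symmetry of $B$, and equality follows from applying the dimension formula twice,
\[
\dim (V^{\bot})^{\bot} = \dim \mathfrak{g} - \dim V^{\bot} = \dim V.
\]

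For $V \subseteq W \Leftrightarrow V^{\bot} \supseteq W^{\bot}$, the forward implication is a direct unwinding of the definitions of the orthogonal complement. For the reverse implication I would apply the forward direction to $V^{\bot} \supseteq W^{\bot}$ to conclude $(V^{\bot})^{\bot} \subseteq (W^{\bot})^{\bot}$, and then invoke $(V^{\bot})^{\bot} = V$ and $(W^{\bot})^{\bot} = W$. For $V^{\bot} \cap W^{\bot} \subseteq (V+W)^{\bot}$, I would simply observe that any $x$ orthogonal to all of $V$ and all of $W$ is orthogonal to every $v+w$ by bilinearity. Finally, $(V \cap W)^{\bot} \supseteq V^{\bot} + W^{\bot}$ follows by applying the already-proved implication $V \cap W \subseteq V$ (resp.\ $V \cap W \subseteq W$) to get $V^{\bot} \subseteq (V \cap W)^{\bot}$ and $W^{\bot} \subseteq (V \cap W)^{\bot}$, then summing.

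I do not anticipate a genuine obstacle here; every step is routine linear algebra. The only conceptual point worth highlighting is the essential use of non-degeneracy of $B$, which is precisely what makes $\phi$ an isomorphism and hence what allows the dimension formula (and, through it, $(V^{\bot})^{\bot} = V$) to hold. Without non-degeneracy, only the trivial inclusions survive.
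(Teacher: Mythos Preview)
Your argument is correct and complete; every step is sound, and the reliance on non-degeneracy (via the isomorphism $\phi:\mathfrak g\to\mathfrak g^*$) is identified in exactly the right place. The paper, for its part, gives no proof at all for this lemma---it falls under the blanket remark ``obvious proofs are omitted''---so there is nothing to compare against beyond noting that your write-up supplies the routine linear-algebra details the authors chose to suppress.
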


\begin{lemma} Let $({\mathfrak g}, B)$ be a metric $n$-Lie algebra and $J_{1}$,
$J_{2}$ isotropic subspaces of $G$. Then $J_{1}+J_{2}$ is isotropic
if and only if $B(J_1, J_{2})=0$.
\end{lemma}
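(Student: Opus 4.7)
The plan is to unpack both sides in terms of the bilinear form $B$ and rely on bilinearity plus the symmetry of $B$; no $n$-Lie structure is needed, only the fact that $B$ is a symmetric bilinear form on ${\mathfrak g}$. Recall that a subspace $W$ is isotropic precisely when $B(W,W)=0$, so the hypotheses give $B(J_1,J_1)=0$ and $B(J_2,J_2)=0$.

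For the forward direction, suppose $J_1+J_2$ is isotropic, so $B(J_1+J_2,J_1+J_2)=0$. Since $J_1 \subseteq J_1+J_2$ and $J_2 \subseteq J_1+J_2$, for any $u\in J_1$ and $v\in J_2$ both $u$ and $v$ lie in $J_1+J_2$, hence $B(u,v)=0$. This shows $B(J_1,J_2)=0$.

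For the converse, assume $B(J_1,J_2)=0$. Take arbitrary $x=x_1+x_2,\ y=y_1+y_2 \in J_1+J_2$ with $x_i,y_i\in J_i$. Expanding by bilinearity,
\[
B(x,y)=B(x_1,y_1)+B(x_1,y_2)+B(x_2,y_1)+B(x_2,y_2).
\]
The first and last terms vanish because $J_1$ and $J_2$ are isotropic, and the middle two vanish by the hypothesis $B(J_1,J_2)=0$ (using symmetry of $B$ for $B(x_2,y_1)=B(y_1,x_2)$). Hence $B(J_1+J_2,J_1+J_2)=0$, i.e., $J_1+J_2$ is isotropic.

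There is no real obstacle here; this is a direct computation with a bilinear form, and the statement is a purely linear-algebraic analogue of the standard fact that the sum of two totally isotropic subspaces is totally isotropic exactly when they are mutually orthogonal. The only thing to be careful about is to invoke symmetry of $B$ when passing from $B(J_1,J_2)=0$ to $B(J_2,J_1)=0$.
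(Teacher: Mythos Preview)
Your proof is correct and follows essentially the same approach as the paper: expand $B(J_1+J_2,J_1+J_2)$ by bilinearity, kill the diagonal terms using isotropy of $J_1$ and $J_2$, and observe that what remains is $B(J_1,J_2)$ (using symmetry). The paper states this in one line as $B(J_1+J_2,J_1+J_2)=B(J_1,J_2)$, while you have written out the details more fully.
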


\begin{lemma}[\cite{14}] Let $({\mathfrak g}, B)$ be a metric $n$-Lie algebra,
${\mathfrak g}={\mathfrak s}\oplus {\mathfrak r}$ the
Levi-decomposition. Then ${\mathfrak r}^{\bot}$ is isotropic if and
only if ${\mathfrak g}$ has no strong semisimple ideals. Moreover,
we have
$$
    C_{{\mathfrak g}}({\mathfrak r})=C({\mathfrak g}) \oplus  {\mathfrak r}^{\bot}
    \quad\mbox{and}\quad[{\mathfrak s}, \cdots, {\mathfrak s}, {\mathfrak r}^{\bot}]={\mathfrak r}^{\bot}.
$$
\end{lemma}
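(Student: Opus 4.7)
My plan is to first reformulate ``${\mathfrak r}^\bot$ is isotropic'' as the inclusion ${\mathfrak r}^\bot \subseteq {\mathfrak r}$ (using $({\mathfrak r}^\bot)^\bot={\mathfrak r}$ from Lemma~2.2), then prove both directions of the equivalence, and finally derive the two structural identities using invariance of $B$ together with complete reducibility of ${\mathfrak s}$-modules.

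For the implication ``${\mathfrak g}$ has a strong semisimple ideal $\Rightarrow {\mathfrak r}^\bot$ is not isotropic'', let $J$ be such an ideal. Since $J$ is a direct sum of simple ideals of $J$, each of which is the perfect $(n+1)$-dimensional simple $n$-Lie algebra, $J$ is itself perfect, and $J \cap {\mathfrak r}=0$. For any $x=[y_1,\dots,y_n] \in J$ with $y_i \in J$ and any $r \in {\mathfrak r}$, invariance (2.3) gives
$$B(x,r)=-B([y_1,\dots,y_{n-1},r],y_n)=0,$$
since $[y_1,\dots,y_{n-1},r] \in J \cap {\mathfrak r}=0$ by the ideal property of both $J$ and ${\mathfrak r}$. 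Hence $J \subseteq {\mathfrak r}^\bot$ but $J \not\subseteq {\mathfrak r}$, so ${\mathfrak r}^\bot$ fails to be isotropic. For the converse, starting from ${\mathfrak r}^\bot \not\subseteq {\mathfrak r}$, the image of ${\mathfrak r}^\bot$ in the strong semisimple quotient ${\mathfrak g}/{\mathfrak r} \cong {\mathfrak s}$ is a nonzero ideal, hence a direct sum of certain simple components of ${\mathfrak s}$. I would then take the stabilized derived series of ${\mathfrak r}^\bot$, producing a nonzero perfect ideal $I$ of ${\mathfrak g}$ inside ${\mathfrak r}^\bot$, and argue that the radical $I \cap {\mathfrak r}$ of $I$ must vanish: indeed $I \subseteq {\mathfrak r}^\bot$ forces $I \cap {\mathfrak r} \subseteq I \cap I^\bot$, and a structural analysis using non-degeneracy of $B$ forces this to be zero, so $I$ is strong semisimple.

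For $C_{\mathfrak g}({\mathfrak r})=C({\mathfrak g}) \oplus {\mathfrak r}^\bot$, I first rewrite
$$C_{\mathfrak g}({\mathfrak r})=[{\mathfrak r},{\mathfrak g},\dots,{\mathfrak g}]^\bot$$
by applying invariance to transport the defining relation $[x,{\mathfrak r},{\mathfrak g},\dots,{\mathfrak g}]=0$ into an orthogonality. Since ${\mathfrak s}$ is perfect and every bracket containing a factor from ${\mathfrak r}$ stays in ${\mathfrak r}$, we have the direct decomposition ${\mathfrak g}^1={\mathfrak s} \oplus [{\mathfrak r},{\mathfrak g},\dots,{\mathfrak g}]$, so
$$C({\mathfrak g})=({\mathfrak g}^1)^\bot={\mathfrak s}^\bot \cap C_{\mathfrak g}({\mathfrak r}) \subseteq C_{\mathfrak g}({\mathfrak r}),$$
while ${\mathfrak r}^\bot \subseteq C_{\mathfrak g}({\mathfrak r})$ because $[{\mathfrak r},{\mathfrak g},\dots,{\mathfrak g}] \subseteq {\mathfrak r}$. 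The sum is direct since
$$C({\mathfrak g}) \cap {\mathfrak r}^\bot=({\mathfrak g}^1)^\bot \cap {\mathfrak r}^\bot=({\mathfrak g}^1+{\mathfrak r})^\bot=({\mathfrak s}+{\mathfrak r})^\bot={\mathfrak g}^\bot=0,$$
and a dimension count using $\dim C({\mathfrak g})+\dim {\mathfrak g}^1=\dim {\mathfrak g}$ closes the equality.

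For $[{\mathfrak s},\dots,{\mathfrak s},{\mathfrak r}^\bot]={\mathfrak r}^\bot$, the inclusion $\subseteq$ is immediate from ${\mathfrak r}^\bot$ being an ideal. For the reverse, since ${\mathfrak r}^\bot \subseteq C_{\mathfrak g}({\mathfrak r})$, skew-symmetry yields $[v,{\mathfrak g},\dots,{\mathfrak g}]=[v,{\mathfrak s},\dots,{\mathfrak s}]$ for every $v \in {\mathfrak r}^\bot$; hence the ${\mathfrak s}$-invariants in ${\mathfrak r}^\bot$ equal ${\mathfrak r}^\bot \cap C({\mathfrak g})=0$, and complete reducibility of finite-dimensional ${\mathfrak s}$-modules (the strong semisimple $n$-Lie algebra analogue of Weyl's theorem) forces ${\mathfrak r}^\bot=[{\mathfrak s},\dots,{\mathfrak s},{\mathfrak r}^\bot]$. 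The main obstacle is the backward direction of the equivalence: fabricating a genuine strong semisimple ideal of ${\mathfrak g}$ (not merely of ${\mathfrak g}/{\mathfrak r}$) from the non-isotropy of ${\mathfrak r}^\bot$; this is where \cite{14} does its substantive work, balancing the perfectness of the terminal derived ideal of ${\mathfrak r}^\bot$ against the orthogonality to ${\mathfrak r}$ forced by containment in ${\mathfrak r}^\bot$.
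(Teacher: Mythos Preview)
The paper does not give its own proof of this lemma: it is quoted verbatim from \cite{14} and no argument appears in the present text, so there is nothing here to compare your proposal against line by line.

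That said, your outline is largely sound and in the spirit of how such statements are proved in the metric Lie-algebra literature. The forward implication (a strong semisimple ideal forces $J\subseteq{\mathfrak r}^{\bot}$ with $J\not\subseteq{\mathfrak r}$) is clean and correct. Your derivation of $C_{\mathfrak g}({\mathfrak r})=C({\mathfrak g})\oplus{\mathfrak r}^{\bot}$ via ${\mathfrak g}^{1}={\mathfrak s}\oplus[{\mathfrak r},{\mathfrak g},\dots,{\mathfrak g}]$ and the dimension count is exactly right; note that this identity holds unconditionally, not only under the ``no strong semisimple ideals'' hypothesis. For $[{\mathfrak s},\dots,{\mathfrak s},{\mathfrak r}^{\bot}]={\mathfrak r}^{\bot}$ your appeal to complete reducibility of ${\mathfrak s}$-modules is the right idea, though you should cite the specific $n$-Lie version (it is in \cite{22} in this paper's bibliography) rather than invoke it by analogy.

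The one genuine gap is the one you flag yourself: the backward direction of the equivalence. Your sketch---pass to the stabilized derived series of ${\mathfrak r}^{\bot}$ to get a perfect ideal $I$, then argue $I\cap{\mathfrak r}=0$---needs more than ``a structural analysis using non-degeneracy''. The step $I\cap{\mathfrak r}\subseteq I\cap I^{\bot}$ only shows this intersection is isotropic, not that it vanishes; you still have to rule out the possibility that $I$ carries a nontrivial isotropic abelian piece inside ${\mathfrak r}$ while remaining perfect. Since the present paper defers this to \cite{14}, you would need to reconstruct that argument (or consult the reference) to close the loop.
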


\begin{lemma}[\cite{14}] Let $W$ be a subspace of a metric $n$-Lie
algebra $({\mathfrak g}, B)$ and $I$ an ideal. Then
\begin{enumerate}
\item $C_{{\mathfrak g}}(I)=[I, {\mathfrak g}, \cdots,  {\mathfrak g}]^{\bot}.$
\item $W$ is an ideal of ${\mathfrak g}$ if and only if $W^{\bot}$ is contained in $C_{{\mathfrak g}}(W)$.
\item One-dimensional non-degenerate ideals are contained in $C({\mathfrak g})$.
\end{enumerate}
\end{lemma}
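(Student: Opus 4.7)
The plan is to prove the three assertions in the order stated, since (2) will fall out of (1) by a short orthogonal-complement manipulation, and (3) will be handled separately using a telltale antisymmetry coming from the invariance axiom. For (1), the idea is to translate the defining condition $[x,I,\mathfrak{g},\ldots,\mathfrak{g}]=0$ into an orthogonality condition by combining the skew-symmetry (2.1) with the invariance (2.3). Concretely, for $x\in\mathfrak{g}$, $w\in I$, and $g_2,\ldots,g_{n-1},y\in\mathfrak{g}$, I would shuffle $[x,w,g_2,\ldots,g_{n-1}]$ by (2.1) so that $x$ lands in the last slot and then apply (2.3) once to move it across the bilinear form, obtaining an identity of the shape $B([x,w,g_2,\ldots,g_{n-1}],y)=\pm B(x,[w,g_2,\ldots,g_{n-1},y])$. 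Since $B$ is non-degenerate, requiring the left side to vanish for all $y$ is the same as requiring the right side to vanish for all $y$, and letting $w,g_i,y$ range shows $x\in C_{\mathfrak{g}}(I)$ iff $x\perp[I,\mathfrak{g},\ldots,\mathfrak{g}]$. Note that this derivation does not actually use that $I$ is an ideal, so the identity $C_{\mathfrak{g}}(W)=[W,\mathfrak{g},\ldots,\mathfrak{g}]^{\bot}$ is valid for every subspace $W$.

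For (2), having the identity from (1) extended to arbitrary subspaces, the proof becomes a single application of Lemma 2.2: taking orthogonal complements in the inclusion $W^{\bot}\subseteq C_{\mathfrak{g}}(W)=[W,\mathfrak{g},\ldots,\mathfrak{g}]^{\bot}$ gives $[W,\mathfrak{g},\ldots,\mathfrak{g}]\subseteq (W^{\bot})^{\bot}=W$, which is exactly the statement that $W$ is an ideal, and the reverse implication is read off the same chain of equivalences.

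For (3), I would write the one-dimensional non-degenerate ideal as $I=\mathbb{C}v$ with $B(v,v)\neq 0$. Because $I$ is an ideal, each bracket $[v,g_2,\ldots,g_n]$ is of the form $\lambda(g_2,\ldots,g_n)\,v$. Setting $y_1=y_2=v$ in (2.3) forces $B([x_1,\ldots,x_{n-1},v],v)=0$, so after reordering via (2.1) to bring $v$ to the front, one gets $\lambda(g_2,\ldots,g_n)B(v,v)=0$, whence $\lambda\equiv 0$ and $v\in C(\mathfrak{g})$.

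The only real obstacle is clerical: keeping track of the signs produced by (2.1) when cycling $x$ or $v$ between the first and last bracket slot, and making sure (2.3) is applied to a version of the bracket where the element to be moved sits in the correct position. None of these sign juggles changes the vanishing statements, so the arguments remain clean once the identities are set up.
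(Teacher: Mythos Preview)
Your argument is correct in all three parts; the only slip is that in part (2) you cite Lemma~2.2 when you mean Lemma~2.1 (it is Lemma~2.1 that records $V\subseteq W\Leftrightarrow V^{\bot}\supseteq W^{\bot}$ and $(V^{\bot})^{\bot}=V$). Note that the paper does not supply its own proof of this lemma: it is quoted from \cite{14} and falls under the blanket ``obvious proofs are omitted,'' so there is nothing in the paper to compare your approach against. Your derivation is the standard one and would be exactly what a reader reconstructing the omitted proof would write.
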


\begin{prop} Let $({\mathfrak g}, B)$ be a metric $n$-Lie algebra and
$I$ a subalgebra (an ideal) of ${\mathfrak g}$. Then $I$ is
isotropic if and only if there exist linear endomorphisms $D$ and
$T$ of $I$ such that $B(D(x), y)=B(x, T(y))$ for any $x, y\in I$,
where $D$ is surjective and $T$ is nilpotent.
\end{prop}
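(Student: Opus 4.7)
The plan is to prove both directions by exhibiting or exploiting an iterated version of the adjunction-style identity $B(D(x),y) = B(x,T(y))$.

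For the forward direction, assuming $I$ is isotropic, I would simply exhibit the required operators rather than construct them from the algebra structure: take $D = \mathrm{id}_I$, which is trivially surjective, and $T = 0$, which is trivially nilpotent. Then for any $x,y \in I$ we have $B(D(x),y) = B(x,y) = 0 = B(x,T(y))$, where the middle equality is exactly the isotropy hypothesis $B|_{I \times I} \equiv 0$. So the forward direction is immediate.

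For the converse, assume $D$ and $T$ with the stated properties exist. The key step is an easy induction showing
\[
B(D^{k}(x), y) \;=\; B(x, T^{k}(y)) \qquad \text{for all } k \geq 1,\; x,y \in I.
\]
The base case $k=1$ is the hypothesis; for the inductive step, write $B(D^{k}(x), y) = B(D(D^{k-1}(x)), y) = B(D^{k-1}(x), T(y)) = B(x, T^{k-1}(T(y))) = B(x, T^{k}(y))$. Since $T$ is nilpotent on $I$, choose $k$ with $T^{k} = 0$. Since $D$ is surjective, so is $D^{k}$. Given arbitrary $z, y \in I$, write $z = D^{k}(x)$ for some $x \in I$; then $B(z, y) = B(D^{k}(x), y) = B(x, T^{k}(y)) = B(x, 0) = 0$. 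Hence $B(I, I) = 0$, i.e.\ $I$ is isotropic.

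There is no real obstacle here: the only substantive content is the iteration, and both directions are essentially one-line arguments once the correct choices (identity plus zero in one direction, and powers $D^{k}, T^{k}$ in the other) are made. I would present the proof in exactly this order, emphasizing that the hypothesis that $I$ is a subalgebra or ideal plays no role in the argument beyond ensuring $D, T$ are endomorphisms of $I$ rather than maps into the ambient $\mathfrak g$.
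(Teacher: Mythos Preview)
Your proof is correct and follows essentially the same approach as the paper's: for the forward direction the paper also takes $D=\mathrm{id}_I$ and $T=0$, and for the converse it likewise iterates to obtain $B(D^m(x),y)=B(x,T^m(y))=0$ with $T^m=0$ and $D^m$ surjective. Your write-up merely spells out the induction step more explicitly than the paper does.
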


\begin{proof} Assume that $I$ is isotropic. It is enough to let $D(x)=x$ and $T(x)=0$ for
any $x\in I$. Conversely, since $T$ is nilpotent, there exists an
integer $m$ such that $T^m=0$. Then any $x, y\in {\mathfrak g},$
$B(D^m(x), y)=B(x, T^m(y))=0$. Therefore, $B(I, I)=0$ since
$D^m(I)=I$.
\end{proof}

\section{Isotropic ideals of metric $n$-Lie algebras}
Firstly, we list some lemmas.
\begin{lemma} Let $({\mathfrak g}, B)$ be a metric $n$-Lie algebra,
${\mathfrak g}={\mathfrak s}\oplus {\mathfrak r}$ the
Levi-decomposition and $I$ an ideal.
\begin{enumerate}
\item If $I$ is isotropic, then $I$ is an abelian ideal of ${\mathfrak g}$.
\item If $I$ is a non-degenerate abelian ideal, then $I\subseteq C({\mathfrak g}).$
\item The center $C({\mathfrak g})$ is isotropic if and only if $C({\mathfrak g})\subseteq
{\mathfrak g}^1.$
\item ${\mathfrak g}$ is perfect if and only if $C({\mathfrak g})=0$.
\item ${\mathfrak g}$ has no strong semi-simple ideals if and only if ${\mathfrak r}^{\bot}
\subseteq C({\mathfrak r})$. Here $C({\mathfrak r})$ denotes the
center of ${\mathfrak r}$.
\end{enumerate}
\end{lemma}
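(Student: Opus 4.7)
The plan is to handle the five parts roughly in the order stated, since the later items mostly reduce to the identity $\mathfrak g^{1}=C(\mathfrak g)^{\bot}$ and to Lemma 2.3, while part (1) is the real technical step that deserves most of the work.

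For (1), I would fix $x,y\in I$ and $g_{1},\dots,g_{n-2},w\in\mathfrak g$ and try to show the bracket $[x,y,g_{1},\dots,g_{n-2}]$ is killed by $B(\,\cdot\,,w)$. The trick is to reorder the bracket by antisymmetry so that $y$ sits in the last slot, apply the invariance identity (2.3) to move $y$ out and $w$ in, and then use that $I$ is an ideal so that the new inner bracket lands in $I$; pairing it with $x\in I$ against the isotropic form $B$ gives $0$. Non-degeneracy of $B$ then forces $[x,y,g_{1},\dots,g_{n-2}]=0$, which is precisely $[I,I,\mathfrak g,\dots,\mathfrak g]=0$. The main obstacle is choosing the right permutation before invoking (2.3); once $y$ and $w$ are swapped so that the surviving argument sits inside $I$, everything collapses.

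For (2), having $I$ non-degenerate lets me split $\mathfrak g=I\oplus I^{\bot}$ as an orthogonal direct sum of ideals. Given $x\in I$, I would expand $[x,g_{1},\dots,g_{n-1}]$ along this decomposition of the $g_{i}$'s; every term containing a component from $I$ vanishes since $I$ is abelian, so only $[x,b_{1},\dots,b_{n-1}]$ with all $b_{i}\in I^{\bot}$ survives. This element lies in $I$ (as $I$ is an ideal) and in $I^{\bot}$ (as $I^{\bot}$ is an ideal), hence is zero, giving $x\in C(\mathfrak g)$. For (3) and (4) I would simply appeal to the identity $\mathfrak g^{1}=C(\mathfrak g)^{\bot}$ recorded in Section~2: $C(\mathfrak g)\subseteq\mathfrak g^{1}$ is literally the statement $C(\mathfrak g)\subseteq C(\mathfrak g)^{\bot}$, i.e.\ $C(\mathfrak g)$ isotropic, and perfectness $\mathfrak g^{1}=\mathfrak g$ reads as $C(\mathfrak g)^{\bot}=\mathfrak g$, equivalent to $C(\mathfrak g)=0$ by non-degeneracy.

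Finally for (5), the forward direction is immediate from Lemma 2.3: absence of strong semisimple ideals gives $\mathfrak r^{\bot}$ isotropic, hence $\mathfrak r^{\bot}\subseteq\mathfrak r$, and the same lemma yields $C_{\mathfrak g}(\mathfrak r)\supseteq\mathfrak r^{\bot}$, so $[\mathfrak r^{\bot},\mathfrak r,\mathfrak r,\dots,\mathfrak r]\subseteq[\mathfrak r^{\bot},\mathfrak r,\mathfrak g,\dots,\mathfrak g]=0$, which is $\mathfrak r^{\bot}\subseteq C(\mathfrak r)$. Conversely, $\mathfrak r^{\bot}\subseteq C(\mathfrak r)\subseteq\mathfrak r$ says $\mathfrak r$ is coisotropic, i.e.\ $\mathfrak r^{\bot}\subseteq(\mathfrak r^{\bot})^{\bot}$ is isotropic, and Lemma 2.3 again converts this back to the non-existence of strong semisimple ideals. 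So the entire proof rests on one genuine calculation—the invariance/antisymmetry juggling in (1)—together with routine applications of $\mathfrak g^{1}=C(\mathfrak g)^{\bot}$ and Lemma 2.3.
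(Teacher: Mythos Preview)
Your proposal is correct and follows essentially the same route as the paper. The only organizational difference is that the paper's argument for (1) first proves the slightly stronger fact $[I,I^{\bot},\mathfrak g,\dots,\mathfrak g]=0$ (valid for any ideal $I$, via invariance and the fact that $I^{\bot}$ is an ideal) and then specializes using $I\subseteq I^{\bot}$; this stronger statement is then reused verbatim in (2), whereas you argue (2) via $I\cap I^{\bot}=0$ instead---both are equally short and valid.
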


\begin{lemma} Let $({\mathfrak g}, B)$ be a metric n-Lie algebra, $I$
an isotropic ideal satisfying $I^{\bot}\neq I$. Then $(I^{\bot}/I,
B)$ is a metric $n$-Lie algebra.
\end{lemma}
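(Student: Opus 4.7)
The plan is to transfer the $n$-Lie algebra structure and the metric from $I^\bot$ down to the quotient $I^\bot/I$, and verify well-definedness plus the metric axioms at each step.

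First, I would define the bracket on $I^\bot/I$ by $[\bar{x}_1,\ldots,\bar{x}_n]=\overline{[x_1,\ldots,x_n]}$ for $x_i\in I^\bot$. For this to make sense, two closure facts are needed: (a) $[I^\bot,\ldots,I^\bot]\subseteq I^\bot$, and (b) if one of the $x_i$ lies in $I$, then $[x_1,\ldots,x_n]\in I$. Since $I$ is an ideal of $\mathfrak g$, its orthogonal complement $I^\bot$ is also an ideal (as noted in Section 2), so $[I^\bot,\mathfrak g,\ldots,\mathfrak g]\subseteq I^\bot$, which gives (a). Similarly $[I,\mathfrak g,\ldots,\mathfrak g]\subseteq I$ gives (b). Skew-symmetry (2.1) and the Filippov identity (2.2) then transfer to the quotient verbatim, so $I^\bot/I$ is an $n$-Lie algebra.

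Next, I would define $\bar B(\bar x,\bar y)=B(x,y)$ for $x,y\in I^\bot$. Well-definedness requires $B(I,I^\bot)=0$, which is exactly the definition of $I^\bot$. Symmetry is inherited. For non-degeneracy, suppose $\bar B(\bar x,\bar y)=0$ for every $\bar y\in I^\bot/I$; then $B(x,I^\bot)=0$, so $x\in(I^\bot)^\bot=I$ by Lemma 2.2, giving $\bar x=0$. Here the hypothesis $I^\bot\neq I$ guarantees $I^\bot/I\neq 0$, so the resulting form is a genuine non-degenerate form on a non-trivial space.

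Finally, the invariance condition (2.3) for $\bar B$ follows immediately from the same identity for $B$ on $\mathfrak g$, since all the arguments lie in $I^\bot$ and the bracket $[x_1,\ldots,x_{n-1},y_1]$ lies in $I^\bot$ as well. The only real content is the bookkeeping in the first paragraph (the two ideal-closure facts), so I do not expect any serious obstacle; the main point is to use $I\subseteq I^\bot$, the fact that $I^\bot$ is an ideal, and the identity $(I^\bot)^\bot=I$ in the right places.
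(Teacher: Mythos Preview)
Your proof is correct and is exactly the routine verification the paper has in mind; the paper in fact omits this proof as obvious (see the remark at the end of Section~1), and the only content is precisely what you identified: closure of the bracket via the ideal properties of $I$ and $I^\bot$, well-definedness of the induced form via $B(I,I^\bot)=0$, and non-degeneracy via $(I^\bot)^\bot=I$. One tiny labeling slip: the identity $(I^\bot)^\bot=I$ is in Lemma~2.1 of the paper, not Lemma~2.2.
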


\begin{lemma} Let $({\mathfrak g}, B)$ be a metric $n$-Lie algebra, $I$ a minimal ideal. Then $I$ is isotropic
if and only if $I$ is degenerate.
\end{lemma}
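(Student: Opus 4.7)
The plan is to unpack the two definitions and exploit minimality. Recall from Section 2 that $I$ isotropic means $I\subseteq I^{\bot}$ (equivalently $B(I,I)=0$), while $I$ non-degenerate means $I\cap I^{\bot}=0$; so $I$ is degenerate precisely when $I\cap I^{\bot}\neq 0$.

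For the forward direction, if $I$ is isotropic then $I\cap I^{\bot}=I$, which is nonzero since any minimal ideal is by convention nonzero; hence $I$ is degenerate. This half is immediate and requires no use of minimality.

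For the converse, I would argue as follows. Assume $I$ is a degenerate minimal ideal, so $I\cap I^{\bot}\neq 0$. Since $I$ is an ideal of ${\mathfrak g}$, the excerpt already notes that $I^{\bot}$ is also an ideal of ${\mathfrak g}$. The intersection of two ideals of an $n$-Lie algebra is again an ideal (this follows directly from the definition of ideal, just as for ordinary Lie algebras), so $I\cap I^{\bot}$ is a nonzero ideal contained in $I$. Minimality of $I$ forces $I\cap I^{\bot}=I$, i.e.\ $I\subseteq I^{\bot}$, which is exactly the statement that $I$ is isotropic.

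There is no real obstacle here; the only point worth flagging is that one must cite (or observe in passing) the two structural facts already recorded in Section 2 — that $I^{\bot}$ inherits the ideal property from $I$, and that intersections of ideals are ideals — so that the minimality of $I$ can be applied to $I\cap I^{\bot}$. The argument is then two lines in each direction.
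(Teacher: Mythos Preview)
Your proof is correct and follows exactly the same approach as the paper: both use that $I\cap I^{\bot}$ is an ideal contained in $I$, so minimality forces it to be $0$ or $I$, and degeneracy rules out $0$. The paper's version is simply more terse, calling the forward direction ``trivial'' and omitting the explicit justification that $I\cap I^{\bot}$ is an ideal.
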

\begin{proof}
The minimality of the ideal $I$ implies $I\cap I^{\bot}=0$ or $I\cap
I^{\bot}=I$. If $I$ is degenerate, i.e., $I\cap I^{\bot}\neq 0$,
then $I\cap I^{\bot}=I \subseteq I^{\bot}.$  The converse is
trivial.
\end{proof}

\begin{theorem} Let $({\mathfrak g}, B)$ be a metric $n$-Lie algebra with
non-trivial solvable radical ${\mathfrak r}$ and $I$ an isomaximal
ideal. Then we have
$${\mathfrak r} \cap {\mathfrak r}^{\bot} \subseteq I
\subseteq {\mathfrak r}.$$ Furthermore assume that $({\mathfrak g},
B)$ is indecomposable. Then
\begin{enumerate}
\item $I^{\bot} \subseteq {\mathfrak r}$.
\item $I^{\bot} /I$ is an abelian subalgebra of ${\mathfrak g}/I$.
\item $I^{\bot}/I=[I^{\bot}/I, I^{\bot}/I, {\mathfrak g}/I,
\cdots, {\mathfrak g}/I] \oplus C_{{\mathfrak g}/I}((I^{\bot}/I)\cap
(I^{\bot}/I)).$
\end{enumerate}
\end{theorem}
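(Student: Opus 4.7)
The plan is to handle the basic inclusion $\mathfrak r \cap \mathfrak r^{\bot} \subseteq I \subseteq \mathfrak r$ first, without using indecomposability. Since $I$ is isotropic, Lemma~3.1(1) gives $I$ abelian, hence solvable, so $I \subseteq \mathfrak r$. For the other inclusion, observe that $\mathfrak r \cap \mathfrak r^{\bot}$ is an ideal and is itself isotropic because $B(\mathfrak r \cap \mathfrak r^{\bot}, \mathfrak r \cap \mathfrak r^{\bot}) \subseteq B(\mathfrak r, \mathfrak r^{\bot}) = 0$. Since also $B(I, \mathfrak r \cap \mathfrak r^{\bot}) \subseteq B(\mathfrak r, \mathfrak r^{\bot}) = 0$, Lemma~2.3 makes $I + (\mathfrak r \cap \mathfrak r^{\bot})$ an isotropic ideal; by isomaximality of $I$ this sum equals $I$, whence $\mathfrak r \cap \mathfrak r^{\bot} \subseteq I$.

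Now assume $({\mathfrak g}, B)$ is indecomposable. For (1), I first claim $\mathfrak r^{\bot} \subseteq \mathfrak r$, which by Lemma~2.4 is equivalent to saying ${\mathfrak g}$ has no strong semisimple ideal. If $\mathfrak s_0$ were such an ideal, then $\mathfrak s_0 \cap \mathfrak s_0^{\bot}$ would be an ideal of ${\mathfrak g}$ contained in the strong semisimple $\mathfrak s_0$, hence itself strong semisimple and therefore perfect. Writing its elements as brackets and applying the invariance of $B$ together with the ideal property gives $B(\mathfrak s_0 \cap \mathfrak s_0^{\bot}, {\mathfrak g}) \subseteq B(\mathfrak s_0 \cap \mathfrak s_0^{\bot}, \mathfrak s_0 \cap \mathfrak s_0^{\bot}) = 0$, forcing $\mathfrak s_0 \cap \mathfrak s_0^{\bot} = 0$ by non-degeneracy; then ${\mathfrak g} = \mathfrak s_0 \oplus \mathfrak s_0^{\bot}$ contradicts indecomposability. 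Hence $\mathfrak r^{\bot} \subseteq \mathfrak r$, so combined with the first part $\mathfrak r^{\bot} \subseteq I$, and taking orthogonals yields $I^{\bot} \subseteq \mathfrak r$.

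For (2) and (3), assume $I \subsetneq I^{\bot}$ (otherwise both are trivial) and work in the metric $n$-Lie algebra $\bar{\mathfrak h} := I^{\bot}/I$ given by Lemma~3.2. The key step is to decompose $\bar{\mathfrak h}$ into minimal ideals of ${\mathfrak g}/I$. For any minimal ${\mathfrak g}/I$-ideal $\bar J \subseteq \bar{\mathfrak h}$, an invariance computation shows that the orthogonal complement of $\bar J$ inside $\bar{\mathfrak h}$ is again a ${\mathfrak g}/I$-ideal, so $\bar J \cap \bar J^{\bot}$ is a ${\mathfrak g}/I$-ideal inside $\bar J$; by minimality it is $0$ or $\bar J$, and the second case would exhibit $\bar J$ as an isotropic ideal of ${\mathfrak g}/I$, lifting to an isotropic ideal of ${\mathfrak g}$ strictly containing $I$, contradicting isomaximality. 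Iterating gives $\bar{\mathfrak h} = \bigoplus_i \bar J_i$ with each $\bar J_i$ minimal and non-degenerate. Since $I^{\bot} \subseteq \mathfrak r$ is solvable, so is each $\bar J_i$, and minimality applied to the ${\mathfrak g}/I$-ideal $\bar J_i^{1} \subseteq \bar J_i$ rules out $\bar J_i^{1} = \bar J_i$; hence $\bar J_i^{1} = 0$. Cross-brackets vanish since $[\bar J_i, \bar J_j, \ldots] \subseteq \bar J_i \cap \bar J_j = 0$ for $i \neq j$, so $[\bar{\mathfrak h}, \ldots, \bar{\mathfrak h}] = 0$, proving (2). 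For (3), the same minimality dichotomy applied to the ${\mathfrak g}/I$-ideal $[\bar J_i, \bar J_i, {\mathfrak g}/I, \ldots, {\mathfrak g}/I] \subseteq \bar J_i$ places each $\bar J_i$ either in $C_{{\mathfrak g}/I}(\bar{\mathfrak h})$ (when the bracket is $0$) or in $[\bar{\mathfrak h}, \bar{\mathfrak h}, {\mathfrak g}/I, \ldots, {\mathfrak g}/I]$ (when it equals $\bar J_i$), and collecting the summands according to this alternative gives the claimed direct sum.

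I expect the main obstacles to be the invariance verification that orthogonal complements inside $\bar{\mathfrak h}$ remain ${\mathfrak g}/I$-ideals, which is needed to make the inductive splitting of $\bar{\mathfrak h}$ work, and the structural step in (1) ruling out strong semisimple ideals; both hinge on using the invariance identity (2.3) carefully in the quotient or in the perfect ideal $\mathfrak s_0 \cap \mathfrak s_0^{\bot}$.
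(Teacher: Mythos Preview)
Your argument for $\mathfrak r \cap \mathfrak r^{\bot} \subseteq I \subseteq \mathfrak r$ and for part (1) is essentially the paper's (you even supply the detail, left implicit there, that indecomposability together with $\mathfrak r \neq 0$ excludes strong semisimple ideals); note that what you cite as Lemmas~2.3 and~2.4 are the paper's Lemmas~2.2 and~2.3.

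For (2) and (3) your route is correct but genuinely different. The paper argues globally: setting $Z = \{x \in I^{\bot} : [x, I^{\bot}, \ldots, I^{\bot}] \subseteq I\}$, it observes that $Z/I = C(\bar{\mathfrak h}) = (\bar{\mathfrak h}^{1})^{\bot}$ inside $\bar{\mathfrak h} = I^{\bot}/I$, uses isomaximality \emph{once} to force $\bar{\mathfrak h}^{1} \cap (Z/I) = 0$, whence $\bar{\mathfrak h} = \bar{\mathfrak h}^{1} \oplus C(\bar{\mathfrak h})$; then $\bar{\mathfrak h}^{1}$ is perfect and solvable, hence zero. Part (3) is the same trick with $Z_{1} = \{x \in I^{\bot} : [x, I^{\bot}, {\mathfrak g}, \ldots, {\mathfrak g}] \subseteq I\}$. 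You instead prove a complete-reducibility statement --- $\bar{\mathfrak h}$ is an orthogonal sum of minimal non-degenerate ${\mathfrak g}/I$-ideals --- and then treat each summand by minimality plus solvability. This yields a finer structural picture of $\bar{\mathfrak h}$ at the cost of an inductive setup and repeated appeals to isomaximality; the paper's argument is shorter and needs isomaximality only once per part. One step in your sketch of (3) should be tightened: the dichotomy gives $\bar{\mathfrak h} = A \oplus B$ with $A \subseteq [\bar{\mathfrak h}, \bar{\mathfrak h}, {\mathfrak g}/I, \ldots, {\mathfrak g}/I]$ and $B \subseteq C_{{\mathfrak g}/I}(\bar{\mathfrak h}) \cap \bar{\mathfrak h}$, but you still need the reverse inclusions. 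The first follows from your cross-term vanishing; for the second, one more invariance computation shows $[\bar{\mathfrak h}, \bar{\mathfrak h}, {\mathfrak g}/I, \ldots, {\mathfrak g}/I] \perp \bigl(C_{{\mathfrak g}/I}(\bar{\mathfrak h}) \cap \bar{\mathfrak h}\bigr)$ in $\bar{\mathfrak h}$, and since $A$ is non-degenerate this forces $C_{{\mathfrak g}/I}(\bar{\mathfrak h}) \cap \bar{\mathfrak h} \subseteq A^{\bot} = B$.
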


\begin{proof}
Let ${\mathfrak g}={\mathfrak s} \oplus {\mathfrak r}$ be the Levi
decomposition. Since the ideal $I$ is isotropic, by Lemma 3.1, $I$
is abelian. Then $I\subseteq {\mathfrak r}$ and $B(I, {\mathfrak
r}\cap {\mathfrak r}^{\bot})=0$. Thanks to Lemma 2.2, $I+({\mathfrak
r} \cap {\mathfrak r}^{\bot})$ is an isotropic ideal. This implies
$I+({\mathfrak r} \cap {\mathfrak r}^{\bot})=I$, i.e., ${\mathfrak
r} \cap {\mathfrak r}^{\bot} \subseteq I$.

If $({\mathfrak g}, B)$ is indecomposable, then ${\mathfrak
r}^{\bot}\subseteq {\mathfrak r}$ by Lemma 3.1. Together with Lemma
2.1,
$$I^{\bot}\subseteq ({\mathfrak r} \cap {\mathfrak
r}^{\bot})^{\bot}={\mathfrak r} + {\mathfrak r}^{\bot} \subseteq
{\mathfrak r}.$$

Furthermore, $I^{\bot}/I$ is a solvable subalgebra of ${\mathfrak
g}/I$. Set
 $$Z=\{ x\in I^{\bot} | ~[x, u_1, \cdots, u_{n-1}]\in I, ~\mbox{for any}~ u_1, \cdots, u_{n-1}\in
I^{\bot}\}.$$
 Then $ {Z}$ is an ideal of
${\mathfrak g}$,  ${I}\subseteq {Z}$, and
 $~ {Z}/ {I}= {C}
 ( {I}^{\bot}/ {I}).$
By Lemma 3.2,  $$~Z/I=[
 I^{\bot}/I, \cdots,  I^{\bot}/I]^{\bot}.\eqno(3.1)$$

Since $B([I^{\bot}, \cdots, I^{\bot}], Z)=B(I^{\bot}, [Z, I^{\bot},
\cdots, I^{\bot}])=0$, we have $[I^{\bot}, \cdots, I^{\bot}]\cap Z$
is an isotropic ideal of ${\mathfrak g}$. Thanks to $B([I^{\bot},
\cdots, I^{\bot}]\cap Z, I)=0$ and Lemma 2.2, $ ([I^{\bot}, \cdots,
I^{\bot}]\cap Z)+I$ is an isotropic ideal of ${\mathfrak g}$.
Therefore $[{I}^{\bot}, \cdots, {I}^{\bot}]\cap {Z}\subseteq {I}.$
It follows that $[{I}^{\bot}/{I}, \cdots,
 {I}^{\bot}/{I}]\cap ({Z}/
{I})=0.$ By the identity $(3.1)$, we obtain
$$
 {I}^{\bot}/ {I}=[ {I}^{\bot}/ {I}, \cdots,
 {I}^{\bot}/ {I}] \oplus ( {Z}/
{I})=( {I}^{\bot}/ {I})^{1} \oplus ( {Z}/ {I}).\eqno(3.2)$$
 Therefore,
 $$ ({I}^{\bot}/ {I})^{1}=[ {I}^{\bot}/ {I}, \cdots,
 {I}^{\bot}/ {I}]=[( {I}^{\bot}/ {I})^{1}, \cdots, (
{I}^{\bot}/ {I})^{1}] =({I}^{\bot}/{I})^{(2)}. $$
 Since ${I}^{\bot}/ {I}$ is solvable, we have $({I}^{\bot}/ {I})^1=0$, i.e., ${I}^{\bot}/ {I}$ is abelian.

For the last assertion, let
$$ {Z}_1=\{ x\in {I^{\bot}} | ~[x, y, u_1,
\cdots, u_{n-2}]\in  {I}, ~\mbox{for any}~ y\in {I}^{\bot}, ~u_1,
\cdots, u_{n-2}\in  {\mathfrak g} \}.$$ Then ${I}\subseteq {Z}_1$,
${Z}_1$ is an ideal of ${\mathfrak g}$, and $ {Z}_1/ {I}= ({C}_{
{\mathfrak g}/{I}}( {I}^{\bot}/ {I}))\cap ( {I}^{\bot}/ {I}).$
Similarly, we have
$${Z}_1/ {I}=[
  {I}^{\bot}/ {I},  {I}^{\bot}/ {I},  {\mathfrak g}/ {I}, \cdots,  {\mathfrak g}/ {I}]^{\bot}$$
 and $ ([ {I}^{\bot},  {I}^{\bot},
 {\mathfrak g}, \cdots,  {\mathfrak g}]\cap {Z}_1)+ {I}$
is an isotropic ideal of $ {\mathfrak g}$. Thus $[ {I}^{\bot},
{I}^{\bot},
 {\mathfrak g}, \cdots,  {\mathfrak g}]\cap {Z}_1\subseteq
 {I}.$ It follows that $[ {I}^{\bot}/ {I},
 {I}^{\bot}/{I},  {\mathfrak g}/ {I}, \cdots,
 {\mathfrak g}/ {I}]\cap ( {Z}_1/ {I})=0.$
Therefore, the theorem follows.
\end{proof}

\begin{coro} If $({\mathfrak g}, B)$ is an indecomposable metric $n$-Lie algebra
with the non-zero isomaximal center, then ${\mathfrak g}$ is
solvable.
\end{coro}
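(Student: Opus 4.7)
The strategy is to apply Theorem 3.4 to $I=C(\mathfrak{g})$ and translate the conclusion $I^{\bot}\subseteq\mathfrak{r}$ into a statement that forces the Levi complement to vanish. First I would check the hypotheses: $C(\mathfrak{g})$ is assumed non-zero and isomaximal, hence in particular isotropic. Since the center is always an abelian ideal, it is contained in the solvable radical $\mathfrak{r}$, so $\mathfrak{r}\neq 0$ and Theorem 3.4 applies. Together with the indecomposability hypothesis, part (1) of that theorem yields $C(\mathfrak{g})^{\bot}\subseteq\mathfrak{r}$.

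Next I would use the orthogonality identity recorded in Section 2.2, namely $\mathfrak{g}^{1}=C(\mathfrak{g})^{\bot}$. Substituting, we obtain $\mathfrak{g}^{1}\subseteq\mathfrak{r}$. Writing the Levi decomposition $\mathfrak{g}=\mathfrak{s}\oplus\mathfrak{r}$ from Section 2.1, the strong semisimple part $\mathfrak{s}$ is a direct sum of simple $n$-Lie algebras; by the cited classification every finite dimensional simple $n$-Lie algebra over $\mathbb{C}$ is perfect, so $\mathfrak{s}=\mathfrak{s}^{1}\subseteq\mathfrak{g}^{1}\subseteq\mathfrak{r}$. Since the Levi decomposition is a direct sum, $\mathfrak{s}\cap\mathfrak{r}=0$, forcing $\mathfrak{s}=0$ and hence $\mathfrak{g}=\mathfrak{r}$, which is solvable.

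The argument is essentially a one-line consequence of Theorem 3.4, so there is no real obstacle; the only place where care is needed is the translation $C(\mathfrak{g})^{\bot}=\mathfrak{g}^{1}$, which requires that the metric be invariant and non-degenerate, and the appeal to the fact that the strong semisimple summand is perfect so that it lies inside $\mathfrak{g}^{1}$. Both are standard facts already recorded in the preliminaries.
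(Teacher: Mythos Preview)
Your argument is correct and follows exactly the route the paper takes: apply Theorem~3.4 with $I=C(\mathfrak g)$ to get $\mathfrak g^{1}=C(\mathfrak g)^{\bot}\subseteq\mathfrak r$, then conclude $\mathfrak s=0$. The paper compresses this into a single sentence, while you have spelled out the verification of the hypotheses and the passage from $\mathfrak g^{1}\subseteq\mathfrak r$ to $\mathfrak s=0$ via perfectness of $\mathfrak s$; these are precisely the implicit steps behind the paper's one-line proof.
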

\begin{proof}
  By Theorem 3.4, the derived algebra ${\mathfrak g}^1\subseteq {\mathfrak r}$.
  It follows that ${\mathfrak s}=0.$
\end{proof}

\begin{theorem} Let $({\mathfrak g}, B)$ be an indecomposable metric $n$-Lie
algebra without strong semi-simple ideals and ${\mathfrak
g}={\mathfrak s} \oplus {\mathfrak r}$ the Levi decomposition. Then
one and only one of the following cases holds up to isomorphisms.
\begin{enumerate}
 \item The radical ${\mathfrak r}$ is an isomaximal ideal, and ${\mathfrak r}$ is isomorphic
to ${\mathfrak s}$ as ${\mathfrak s}$-module in the regular
representation.
 \item There exists a nonzero vector subspace ${V}$ of ${\mathfrak r}$
 satisfying
 $$[{V}, {\mathfrak s}, \cdots,  {\mathfrak s}] \subseteq  {V}, ~
~ {\mathfrak r}={\mathfrak r}^{\bot} \oplus  {V},~ ~
  {B}( {\mathfrak s},  {V})=0,$$
 $ {B} |_{ {V} \times  {V}}$ and $B_{({\mathfrak s}\oplus {\mathfrak r}^{\bot})\times
({\mathfrak s}\oplus {\mathfrak r}^{\bot})}$ are non-degenerate.
\end{enumerate}
\end{theorem}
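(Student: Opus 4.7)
The plan is to dichotomize on the position of ${\mathfrak r}^{\bot}$ inside ${\mathfrak r}$. Since ${\mathfrak g}$ has no strong semisimple ideals, Lemma 2.3 gives that ${\mathfrak r}^{\bot}$ is isotropic, so ${\mathfrak r}^{\bot}\subseteq({\mathfrak r}^{\bot})^{\bot}={\mathfrak r}$. The two alternatives ${\mathfrak r}^{\bot}={\mathfrak r}$ and ${\mathfrak r}^{\bot}\subsetneq {\mathfrak r}$ are exhaustive and mutually exclusive, and I expect them to yield Case (1) and Case (2) respectively (Case (2) is distinguished by $V\neq 0$).

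For Case (1), assume ${\mathfrak r}^{\bot}={\mathfrak r}$, so ${\mathfrak r}$ is Lagrangian. Any isotropic ideal $J\supseteq {\mathfrak r}$ satisfies $J\subseteq J^{\bot}\subseteq {\mathfrak r}^{\bot}={\mathfrak r}$, so ${\mathfrak r}$ is isomaximal. Because $B|_{{\mathfrak r}\times{\mathfrak r}}=0$ and $B$ is non-degenerate on ${\mathfrak g}$, it induces a perfect pairing ${\mathfrak s}\times{\mathfrak r}\to\mathbb{C}$. The invariance identity (2.3) makes this pairing ${\mathfrak s}$-equivariant, i.e.\ the action on ${\mathfrak r}$ is minus the transpose of the action on ${\mathfrak s}$, so ${\mathfrak r}\cong {\mathfrak s}^{*}$ as ${\mathfrak s}$-modules in the regular representation. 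To upgrade this to ${\mathfrak r}\cong {\mathfrak s}$, I would invoke the classification of simple $n$-Lie algebras over $\mathbb{C}$: the unique simple one $A_{n+1}$ carries its own invariant non-degenerate symmetric bilinear form, so each simple summand of the strong semisimple ${\mathfrak s}$ is self-dual as a module for itself, giving ${\mathfrak s}^{*}\cong {\mathfrak s}$.

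For Case (2) with ${\mathfrak r}^{\bot}\subsetneq {\mathfrak r}$, I would take $V:={\mathfrak s}^{\bot}\cap {\mathfrak r}$ and verify the four requirements by dimension counting and (2.3). Since ${\mathfrak s}\cap {\mathfrak r}^{\bot}\subseteq {\mathfrak s}\cap {\mathfrak r}=0$, the pairing ${\mathfrak s}\times {\mathfrak r}^{\bot}\to\mathbb{C}$ has trivial kernels on both sides (${\mathfrak r}^{\bot}\cap {\mathfrak s}^{\bot}=({\mathfrak r}+{\mathfrak s})^{\bot}={\mathfrak g}^{\bot}=0$), hence is perfect, so $\dim {\mathfrak r}^{\bot}=\dim {\mathfrak s}$. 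The same vanishing gives ${\mathfrak s}^{\bot}+{\mathfrak r}={\mathfrak g}$, so $\dim V=\dim {\mathfrak r}-\dim {\mathfrak s}=\dim {\mathfrak r}-\dim {\mathfrak r}^{\bot}>0$; combined with $V\cap {\mathfrak r}^{\bot}\subseteq {\mathfrak s}^{\bot}\cap {\mathfrak r}^{\bot}=0$, dimensions force ${\mathfrak r}={\mathfrak r}^{\bot}\oplus V$. For $[V,{\mathfrak s},\ldots,{\mathfrak s}]\subseteq V$, fix $v\in V$ and $s,s_1,\ldots,s_{n-1}\in{\mathfrak s}$; then (2.3) yields
\[B(s,[s_1,\ldots,s_{n-1},v])=-B([s_1,\ldots,s_{n-1},s],v)=0,\]
since $[s_1,\ldots,s_{n-1},s]\in {\mathfrak s}$ and $v\in {\mathfrak s}^{\bot}$, so the bracket lies in ${\mathfrak r}\cap {\mathfrak s}^{\bot}=V$. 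Orthogonality is immediate: $B({\mathfrak s},V)=0$ by construction and $B({\mathfrak r}^{\bot},V)=0$ because $V\subseteq {\mathfrak r}$. Hence ${\mathfrak g}=({\mathfrak s}\oplus {\mathfrak r}^{\bot})\oplus V$ is a $B$-orthogonal direct sum of subspaces, and the non-degeneracy of $B$ on ${\mathfrak g}$ restricts to non-degeneracy on each summand.

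The main obstacle I anticipate is the self-duality step in Case (1): upgrading ${\mathfrak r}\cong {\mathfrak s}^{*}$ to ${\mathfrak r}\cong {\mathfrak s}$ requires the external structural input that every simple $n$-Lie algebra over $\mathbb{C}$ is a metric $n$-Lie algebra in its own right. Everything else is linear-algebra bookkeeping with orthogonal complements, driven by Lemma 2.1, Lemma 2.3, and the invariance identity (2.3).
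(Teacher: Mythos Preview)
Your proposal is correct and follows essentially the same route as the paper: the same dichotomy on whether ${\mathfrak r}^{\bot}={\mathfrak r}$, and in Case~(2) your $V={\mathfrak s}^{\bot}\cap{\mathfrak r}$ coincides with the paper's $L^{\bot}$ for $L={\mathfrak s}\oplus{\mathfrak r}^{\bot}$. The only notable difference is that in Case~(1) the paper cites an external result for ${\mathfrak r}\cong{\mathfrak s}$ as ${\mathfrak s}$-modules, whereas you supply the more self-contained duality argument ${\mathfrak r}\cong{\mathfrak s}^{*}\cong{\mathfrak s}$.
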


\begin{proof} If ${\mathfrak r}$ is isotropic, then ${\mathfrak r}\subseteq {\mathfrak r}^{\bot}$. By Lemma~2.3, ${\mathfrak r}={\mathfrak r}^{\bot}$. For every
isomaximal ideal $I$, thanks to Theorem 3.4, ${\mathfrak r}\cap
{\mathfrak r}^{\bot}\subseteq I\subseteq {\mathfrak r}$. It follows
that ${\mathfrak r}$ is an isomaximal ideal. By Theorem 3.6 in [6],
${\mathfrak r}$ is isomorphic to ${\mathfrak s}$ as a ${\mathfrak
s}$-module.

Assume that ${\mathfrak r}$ is non-isotropic. By Lemma~2.3,
${\mathfrak r}^{\bot}$ is properly contained in ${\mathfrak r}$. Let
${L}= {\mathfrak s} \oplus {\mathfrak r}^{\bot}$.

Firstly, we have ${B} |_{ {L} \times {L}}$ is non-degenerate. In
fact, let $x=s+z\in  {L}$ satisfying $ {B}(x, {L})=0$, where $s\in
{\mathfrak s}, z\in {\mathfrak r}^{\bot}$. Clearly, ${B}(s+z,
{\mathfrak r}^{\bot})= {B}(s, {\mathfrak r}^{\bot})=0$. Namely,
$s\in {\mathfrak r}\cap {\mathfrak s}$. That is, $s=0$. 
Similarly, $z=0$. Then ${B} |_{ {L} \times {L}}$ is non-degenerate.

Furthermore we have $ {B}|_{ {L}^{\bot}\times
 {L}^{\bot}}$ is non-degenerate and
$$ {\mathfrak g}= {L} \oplus  {L}^{\bot}, 
  ~  {B}( {\mathfrak s}, ~
{L}^{\bot})=0, ~  {B}( {\mathfrak r}^{\bot}, ~ {L}^{\bot})=0, ~
{\mathfrak r}= {\mathfrak r}^{\bot} \oplus  {L}^{\bot}.$$ It follows
that $B([{\mathfrak s}, \cdots, {\mathfrak s}, L^{\bot}], {\mathfrak
s})=B(L^{\bot}, [{\mathfrak s}, \cdots, {\mathfrak s}])=0$ and
$B([{\mathfrak s}, \cdots, {\mathfrak s}, L^{\bot}],
 {\mathfrak r}^{\bot})=0$. That is,
$[{\mathfrak s}, \cdots,  {\mathfrak s},
 {L}^{\bot}]\subseteq  {L}^{\bot}.$
\end{proof}

It is well known that the metric dimension of strong semisimple
$n$-Lie algebra ${\mathfrak s}$ is $m$ if ${\mathfrak s}={\mathfrak
s}_1\oplus {\mathfrak s}_2\oplus \cdots\oplus {\mathfrak s}_m$,
where ${\mathfrak s}_i$ are simple ideals. The following theorem is
to determine the metric dimension of metric $n$-Lie algebras with
isomaximal radicals.

\begin{theorem}  Let $( {\mathfrak g},  {B})$ be a perfect metric $n$-Lie
algebra without semisimple ideals and $ {\mathfrak g}= {\mathfrak s}
\oplus {\mathfrak r}$ the Levi decomposition. Then
 $\dim {\mathfrak g} \geq 2(n+1).$ Furthermore assume that the radical ${\mathfrak r}$ is isotropic. Then
\begin{enumerate}
\item ${\mathfrak r}$ is the unique isomaximal ideal.
\item $\dim {\mathfrak g}=2m(n+1)$ and ${\mathfrak g}={\mathfrak g}_1\oplus \cdots\oplus {\mathfrak g}_m,$
where ${\mathfrak g}_i={\mathfrak s}_i\oplus {\mathfrak r}_i$ (as
the direct sum of subalgebras) are ideals of ${\mathfrak g}$,
$${\mathfrak s}={\mathfrak s}_1\oplus \cdots \oplus {\mathfrak s}_m, \quad {\mathfrak r}={\mathfrak r}_1\oplus \cdots \oplus {\mathfrak r}_m,$$
 and ${\mathfrak r}_i$ is isomorphic to ${\mathfrak s}_i$ as
${\mathfrak s}_i$-module in the regular representation, $1\leq i\leq
m$.
\item $B({\mathfrak g}_i, {\mathfrak g}_j)=0$ for $1\leq i\neq j\leq
m$.
\item The metric dimension of ${\mathfrak g}$ is $2m$. Moreover there
exists a basis of ${\mathfrak g}$ such that the matrix form of every
metric $B$ associated with the basis is
$$B=diag(B_1, \cdots, B_m), \quad B_i=\left(
      \begin{array}{cccc}
        \lambda_i I_0 & \mu_i I_0  \\
     \mu_i I_0 & 0  \\
        \end{array}
    \right),  \eqno(3.3)$$  where $\lambda_i,\mu_i\in {\mathbb C}$, $\lambda_i\mu_i\neq 0$, $1\leq i\leq m$, and $I_0$ is the $(n+1)\times (n+1)$-unit
    matrix.
\end{enumerate}
\end{theorem}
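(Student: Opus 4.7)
The plan is to establish the dimension bound first, then tackle the four assertions in sequence under the isotropy hypothesis, using the $\mathfrak{s}$-module structure of $\mathfrak{r}$ as the main organizing device.

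For $\dim \mathfrak{g} \ge 2(n+1)$: perfection gives $C(\mathfrak{g})=0$ by Lemma 3.1(4), while the absence of strong semisimple ideals together with Lemma 3.1(5) yields $\mathfrak{r}^{\bot} \subseteq C(\mathfrak{r}) \subseteq \mathfrak{r}$. A perfect nonzero $n$-Lie algebra cannot be solvable, so $\mathfrak{s} \neq 0$ and $\dim \mathfrak{s} \ge n+1$ (the unique finite-dimensional simple $n$-Lie algebra being $(n+1)$-dimensional). Since $\dim \mathfrak{r}^{\bot} = \dim \mathfrak{g} - \dim \mathfrak{r} = \dim \mathfrak{s}$ and $\mathfrak{r}^{\bot} \subseteq \mathfrak{r}$, we have $\dim \mathfrak{r} \ge \dim \mathfrak{s}$, hence $\dim \mathfrak{g} \ge 2(n+1)$. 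If $\mathfrak{r}$ is additionally isotropic, the two inclusions $\mathfrak{r} \subseteq \mathfrak{r}^{\bot}$ and $\mathfrak{r}^{\bot} \subseteq \mathfrak{r}$ force $\mathfrak{r} = \mathfrak{r}^{\bot}$.

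For (1), any isotropic ideal $J$ is abelian by Lemma 3.1(1), hence solvable, hence contained in $\mathfrak{r}$; since $\mathfrak{r}$ itself is isotropic, it is the unique maximal isotropic ideal. Theorem 3.6 of [6] then provides an $\mathfrak{s}$-module isomorphism $\mathfrak{r} \cong \mathfrak{s}$, which I exploit for (2): decompose $\mathfrak{s} = \mathfrak{s}_1 \oplus \cdots \oplus \mathfrak{s}_m$ into simple ideals (each of dimension $n+1$) and transport to get $\mathfrak{r} = \mathfrak{r}_1 \oplus \cdots \oplus \mathfrak{r}_m$ with $\mathfrak{r}_i \cong \mathfrak{s}_i$ as $\mathfrak{s}_i$-modules. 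To verify $\mathfrak{g}_i := \mathfrak{s}_i \oplus \mathfrak{r}_i$ is an ideal of $\mathfrak{g}$, I would expand $[\mathfrak{g}_i, \mathfrak{g}, \ldots, \mathfrak{g}]$ by multilinearity and note three vanishing mechanisms: (a) $[\mathfrak{s}_i, \mathfrak{s}_j, \mathfrak{g}, \ldots, \mathfrak{g}] \subseteq \mathfrak{s}_i \cap \mathfrak{s}_j = 0$ for $i \neq j$; (b) brackets with at least two entries in $\mathfrak{r}$ vanish, since the isotropic ideal $\mathfrak{r}$ is abelian; (c) $[\mathfrak{s}_j, \ldots, \mathfrak{s}_j, \mathfrak{r}_i] = 0$ for $j \neq i$ by the module decomposition. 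Dimension counting yields $\dim \mathfrak{g} = 2m(n+1)$.

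For (3), combine invariance with perfection of each simple piece: $B(\mathfrak{s}_i, \mathfrak{s}_j) = B(\mathfrak{s}_i, [\mathfrak{s}_j, \ldots, \mathfrak{s}_j]) = \pm B([\mathfrak{s}_i, \mathfrak{s}_j, \ldots], \mathfrak{s}_j) = 0$ for $i \neq j$; the identity $\mathfrak{r}_j = [\mathfrak{s}_j, \ldots, \mathfrak{s}_j, \mathfrak{r}_j]$ (coming from Lemma 2.3 restricted to components) yields $B(\mathfrak{s}_i, \mathfrak{r}_j) = 0$ by the same move; and $B(\mathfrak{r}_i, \mathfrak{r}_j) = 0$ by isotropy of $\mathfrak{r}$. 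For (4), each $\mathfrak{g}_i$ becomes a metric $n$-Lie algebra by (3); picking matched bases via the module iso, Schur's lemma together with uniqueness of the invariant form on the simple $\mathfrak{s}_i$ forces $B|_{\mathfrak{s}_i \times \mathfrak{s}_i} = \lambda_i I_0$ and $B|_{\mathfrak{s}_i \times \mathfrak{r}_i} = \mu_i I_0$, while $B|_{\mathfrak{r}_i \times \mathfrak{r}_i} = 0$ is the isotropy hypothesis and non-degeneracy forces $\mu_i \neq 0$, giving the claimed matrix form and metric dimension $2m$.

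The main obstacle I expect is part (2): establishing that $\mathfrak{g}_i$ is an ideal requires transferring the ideal decomposition of $\mathfrak{s}$ faithfully through the module isomorphism with $\mathfrak{r}$ and verifying compatibility with the $n$-ary bracket by the three-case analysis above. A secondary subtlety is justifying the constraint $\lambda_i \neq 0$ in (4), which goes beyond what the Schur argument and non-degeneracy of $B$ directly provide and may require a basis-change argument ruling out the degenerate value.
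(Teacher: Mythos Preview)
Your overall architecture matches the paper's proof closely: the same dimension argument (the paper phrases it via representation theory, you via the cleaner identity $\dim\mathfrak r^{\bot}=\dim\mathfrak s$), the same appeal to Theorem~3.4 for (1), the same invariance computations for (3), and for (4) the paper carries out by explicit basis calculation what you package as a Schur-type argument. These are cosmetic differences.

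There is, however, a genuine error in your case analysis for (2). Your step (a) asserts
\[
[\mathfrak s_i,\mathfrak s_j,\mathfrak g,\ldots,\mathfrak g]\subseteq \mathfrak s_i\cap\mathfrak s_j=0\quad(i\neq j),
\]
but this inclusion presupposes that $\mathfrak s_i$ and $\mathfrak s_j$ are ideals of $\mathfrak g$. They are not: the standing hypothesis is precisely that $\mathfrak g$ has \emph{no} strong semisimple ideals, so $\mathfrak s_i$ is only an ideal of $\mathfrak s$. In particular, if one of the $\mathfrak g$-slots carries an element of $\mathfrak r$, the bracket lands in $\mathfrak r$, not in $\mathfrak s_i$. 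Your three cases therefore do not cover $[\mathfrak s_i,\mathfrak s_k,\mathfrak s,\ldots,\mathfrak s,\mathfrak r]$ with $i\neq k$, and the justification you give for its vanishing is invalid. The paper closes exactly this gap using the metric: from
\[
B([\mathfrak s_i,\mathfrak s_k,\mathfrak s,\ldots,\mathfrak s,\mathfrak r],\mathfrak r)=0
\quad\text{and}\quad
B([\mathfrak s_i,\mathfrak s_k,\mathfrak s,\ldots,\mathfrak s,\mathfrak r],\mathfrak s)=B(\mathfrak r,[\mathfrak s_i,\mathfrak s_k,\mathfrak s,\ldots,\mathfrak s])=0
\]
one gets $[\mathfrak s_i,\mathfrak s_k,\mathfrak s,\ldots,\mathfrak s,\mathfrak r]=0$, whence $[\mathfrak s_i,\mathfrak s,\ldots,\mathfrak s,\mathfrak r]=[\mathfrak s_i,\ldots,\mathfrak s_i,\mathfrak r_i]=\mathfrak r_i$. (Alternatively, the $\mathfrak s$-module isomorphism $\mathfrak r\cong\mathfrak s$ you already invoke transports the vanishing $[\mathfrak s_i,\mathfrak s_k,\mathfrak s,\ldots,\mathfrak s,\mathfrak s]=0$ directly to $\mathfrak r$; this fixes your argument without the metric, but you need to say so rather than appeal to an ideal property that fails.)

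Your closing worry about $\lambda_i\neq 0$ is well placed: non-degeneracy of $\begin{pmatrix}\lambda_i I_0&\mu_i I_0\\ \mu_i I_0&0\end{pmatrix}$ forces only $\mu_i\neq 0$, and the paper's proof does not supply a separate argument for $\lambda_i\neq 0$ either.
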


\begin{proof} Since ${\mathfrak g}$ is perfect and has no strong semisimple ideals, we
have $${\mathfrak g}^1=[{\mathfrak g}, \cdots, {\mathfrak
g}]={\mathfrak g},\quad {\mathfrak r}^{\bot}\subseteq {\mathfrak
r}\neq 0, \quad {\mathfrak s}\neq 0.$$ Thanks to Lemma 2.3,
$[{\mathfrak s}, \cdots, {\mathfrak s}, {\mathfrak
r}^{\bot}]={\mathfrak r}^{\bot}\neq 0$. By the representation
properties given by \cite{20,22}, we have $\dim {\mathfrak
r}^{\bot}> n.$ Therefore, $\dim {\mathfrak g} \geq 2(n+1)$.

Assume that ${\mathfrak r}$ is isotropic. Then the result (1)
follows from Theorem 3.4.

Now let ${\mathfrak s}={\mathfrak s}_1\oplus \cdots\oplus {\mathfrak
s}_m$ be the decomposition of the strong semisimple subalgebra of
${\mathfrak g}$. By Theorem 3.6 in [6], ${\mathfrak r}$ is
isomorphic to ${\mathfrak s}$ as a ${\mathfrak s}$-module in the
regular representation. Then
$${\mathfrak r}={\mathfrak r}_1\oplus {\mathfrak r}_2\oplus \cdots\oplus {\mathfrak r}_m,$$
where
$$[{\mathfrak s}_i, \cdots, {\mathfrak s}_i, {\mathfrak r}_j]=\delta_{ij}{\mathfrak r}_j, ~1\leq i, j\leq m, \eqno(3.4)$$ that
is, ${\mathfrak r}_i$ is isomorphic to ${\mathfrak s}_i$ as an
${\mathfrak s}_i$-module in the regular representation of
${\mathfrak s}_i$.

Set ${\mathfrak g}_i={\mathfrak s}_i\oplus {\mathfrak r}_i,$ ~$1\leq
i\leq  m.$ For every $i\neq k,$
$$B([{\mathfrak s}_i, {\mathfrak s}_k, {\mathfrak s}, \cdots, {\mathfrak s}, {\mathfrak r}], {\mathfrak r})=
B([{\mathfrak s}_i, {\mathfrak s}_k, {\mathfrak s}, \cdots,
{\mathfrak s}, {\mathfrak r}], {\mathfrak r}^{\bot})=0,
$$
$$B([{\mathfrak s}_i, {\mathfrak s}_k, {\mathfrak s}, \cdots, {\mathfrak s}, {\mathfrak r}], {\mathfrak s})=
B({\mathfrak r}, [{\mathfrak s}_i, {\mathfrak s}_k, {\mathfrak s},
\cdots, {\mathfrak s}])=0. $$ It follows that $[{\mathfrak s}_i,
{\mathfrak s}_k, {\mathfrak s}, \cdots, {\mathfrak s}, {\mathfrak
r}]=0.$ Then
$$[{\mathfrak s}_i, {\mathfrak s}, {\mathfrak s}, \cdots,
{\mathfrak s}, {\mathfrak r}]=[{\mathfrak s}_i, \cdots, {\mathfrak
s}_i, {\mathfrak r}_i]={\mathfrak r}_i.$$ Therefore ${\mathfrak
g}_i$ is an ideal of ${\mathfrak g}$. This proves the result (2).

For every $1\leq i\neq j\leq m,$
$$B({\mathfrak s}_i, {\mathfrak s}_j)=B([{\mathfrak s}_i, \cdots, {\mathfrak s}_i], {\mathfrak s}_j)=B({\mathfrak s}_i, [{\mathfrak s}_i,
\cdots, {\mathfrak s}_i, {\mathfrak s}_j])=0, $$
$$B({\mathfrak r}_i, {\mathfrak r})=0,\quad B({\mathfrak r}_i, {\mathfrak s}_j)=B({\mathfrak r}_i, [{\mathfrak s}_j, \cdots, {\mathfrak s}_j])
=B([{\mathfrak r}_i, {\mathfrak s}_j, \cdots,  {\mathfrak s}_j],
{\mathfrak s}_j)=0.$$ This implies $B({\mathfrak g}_i, {\mathfrak
g}_j)=0$ for $i\neq j$.

Assume that ${\mathfrak g}_0={\mathfrak s}_0\oplus {\mathfrak r}_0$
is an indecomposable perfect $n$-Lie algebra, where ${\mathfrak
s}_0$ is the simple subalgebra and the radical ${\mathfrak r}_0$ is
isomorphic to ${\mathfrak s}_0$ as the ${\mathfrak s}_0$-module in
the regular representation. Thanks to the result (2), $\dim
{\mathfrak r}_0=n+1$. Let $x_1, \cdots x_{n+1}$ be a basis of
${\mathfrak s}_0$ satisfying
$$
[x_1, \cdots, \hat{x}_i, \cdots, x_{n+1}]=(-1)^ix_i, ~ 1\leq i\leq
n+1.
$$
Let $y_1, \cdots, y_{n+1}$ be a basis of ${\mathfrak r}_0$
satisfying
$$
[x_1, \cdots, \hat{x}_i, \cdots, \hat{x}_j , \cdots, x_{n+1},
y_j]=(-1)^{n-j+i}y_i, ~ 1\leq i <j\leq n+1;
$$
$$
[x_1, \cdots, \hat{x}_i, \cdots, \hat{x}_j , \cdots, x_{n+1},
y_k]=0, ~ 1\leq i \neq j\neq k\neq i \leq n+1.
$$
Assume that $B$ is a symmetric invariant bilinear form on
${\mathfrak g}_0$. Then
$$
B(x_i, x_j)=B((-1)^i[x_1, \cdots, \hat{x}_i, \cdots, x_{n+1}], x_j)
$$
$$=B(x_k, (-1)^{n-k-1}[x_1, \cdots, \hat{x}_i, \cdots,
\hat{x_k}, \cdots, x_{n+1}, x_j])=0, k\neq i\neq j\neq k.$$
$$B(x_i, y_j)=B([x_1, \cdots, \hat{x}_i, \cdots, x_{n+1}], y_j)
$$
$$=B(x_k, (-1)^{n-k+1}[x_1, \cdots, \hat{x}_i, \cdots,
\hat{x_k}, \cdots, x_{n+1}, y_j])=0, k\neq i\neq j\neq k.$$
$$
B(x_i, x_i)=B((-1)^i[x_1, \cdots, \hat{x}_i, \cdots, x_{n+1}], x_i)
$$
$$=B(x_j, (-1)^{n-j+i+1}[x_1, \cdots, \hat{x}_i, \cdots,
\hat{x}_j, \cdots, x_{n+1}, x_i])=B(x_j, x_j), i\neq j.$$
$$
B(x_i, y_i)=B((-1)^{i}[x_1, \cdots, \hat{x}_i, \cdots, x_{n+1}],
y_i)
$$
$$=B(x_j, (-1)^{n-j+i+1}[x_1, \cdots, \hat{x}_i, \cdots,
\hat{x}_j, \cdots, x_{n+1}, y_i])=B(x_j, y_j), i\neq j.
$$
Similarly $B(y_i, y_j)=0, $ for $i\neq j.$ Therefore for any metric
$B_0$ on ${\mathfrak g}_0$, there exists a basis $x_1,$ $ \cdots,$ $
x_{n+1},$ $ y_1,$ $ \cdots,$ $ y_{n+1}$ such that the matrix of
$B_0$ associated with the basis is
$$\left(
      \begin{array}{cccc}
        \lambda I_0 & \mu I_0  \\
     \mu I_0 & 0  \\
        \end{array}
    \right),
    $$
 where $\lambda, \mu\in F, ~\lambda \mu\neq 0, $ $I_0$ is the $(n+1)\times
 (n+1)$- unit matrix. Thanks to Theorem 3.1 in \cite{7}, the metric dimension of ${\mathfrak g}_0$ is $2$.
 Therefore, the metric dimension of ${\mathfrak g}$ is $2m$ and the matrix of any
 metric on ${\mathfrak g}$ is (3.3).
\end{proof}

Concerning the simple case in Theorem 3.7, we have the following
result.

\begin{coro} Let $({\mathfrak g}, {B})$ be a perfect metric $n$-Lie
algebra. Then $\dim {\mathfrak g} \geq m(n+1)$, where $m$ is a
positive integer.
\end{coro}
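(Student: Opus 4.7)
The plan is to read off the bound directly from the Levi decomposition $\mathfrak g = \mathfrak s \oplus \mathfrak r$, using the classification of simple $n$-Lie algebras over $\mathbb C$ recalled in Section 2. Because this is a vector-space direct sum, one trivially has $\dim \mathfrak g = \dim \mathfrak s + \dim \mathfrak r \geq \dim \mathfrak s$, so the only content of the statement is a computation of $\dim \mathfrak s$.

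First I would decompose the strong semisimple Levi factor into its simple ideals, writing $\mathfrak s = \mathfrak s_1 \oplus \cdots \oplus \mathfrak s_m$. Then I would invoke the theorem from \cite{19} stated in the Introduction: the unique finite-dimensional simple $n$-Lie algebra over $\mathbb C$ is the $(n+1)$-dimensional perfect one. Hence $\dim \mathfrak s_i = n+1$ for every $i$, so $\dim \mathfrak s = m(n+1)$, and the inequality $\dim \mathfrak g \geq m(n+1)$ follows immediately. This is exactly the same mechanism that produced the lower bound $\dim \mathfrak g \geq 2(n+1)$ at the start of the proof of Theorem 3.7, stripped of the ``no strong semisimple ideals'' assumption.

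The only subtlety is to verify that $m$ is actually a positive integer, i.e., that $\mathfrak s \neq 0$. This follows from perfectness of $\mathfrak g$: if instead $\mathfrak s = 0$, then $\mathfrak g = \mathfrak r$ is solvable, so its derived series terminates at $0$; but $\mathfrak g^1 = \mathfrak g$ forces the derived series to be stationary at $\mathfrak g$, giving $\mathfrak g = 0$. Therefore any non-trivial perfect $\mathfrak g$ has $m \geq 1$. I do not foresee a genuine obstacle here: the corollary is essentially a packaging of the Levi decomposition together with the classification of simple $n$-Lie algebras over $\mathbb C$, and the metric $B$ plays no role in the argument beyond ensuring we are in the metric setting of the ambient section.
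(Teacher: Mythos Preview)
Your proposal is correct and matches the paper's intent: the paper omits the proof entirely (as one of the ``obvious proofs''), merely flagging the corollary as a relaxation of Theorem~3.7, and your argument via the Levi decomposition $\mathfrak g=\mathfrak s\oplus\mathfrak r$ together with the classification of simple $n$-Lie algebras over $\mathbb C$ is precisely the intended mechanism. Your observation that the metric $B$ plays no role is also accurate; the statement holds for any nonzero perfect $n$-Lie algebra over $\mathbb C$, and the corollary is phrased in the metric setting only because of the ambient section.
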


\section{Classifications of $(n+k)$-dimensional metric $n$-Lie algebras}
In this section we classify $(n+k)$-dimensional metric $n$-Lie
algebras, where $2\leq k \leq n+1$. Theorem 4.3 plays a fundamental
role in the classification, which is based on the results of section
3. First, we give a well-known fact.

\begin{lemma} Let $({\mathfrak g}, B)$ be a metric $n$-Lie algebra. Then there is a
decomposition
$${\mathfrak g}=C_1\oplus {\mathfrak g}_1 ~\mbox{(as the
 orthogonal direct sum of ideals)},\eqno(4.1)$$
where the center of ${\mathfrak g}_1$ is isotropic, and $C_1=0$ or
$C_1$ is a non-degenerate abelian ideal.
\end{lemma}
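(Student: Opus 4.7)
The plan is to peel off the non-degenerate part of the center of $\mathfrak{g}$. Set $R = C(\mathfrak{g}) \cap C(\mathfrak{g})^\perp$, the radical of $B$ restricted to $C(\mathfrak{g})\times C(\mathfrak{g})$, and choose a vector-space complement $C_1$ of $R$ inside $C(\mathfrak{g})$, so that $C(\mathfrak{g}) = C_1 \oplus R$ and $B|_{C_1\times C_1}$ is non-degenerate. If $R = C(\mathfrak{g})$ (that is, the center is already isotropic), I would simply take $C_1 = 0$ and $\mathfrak{g}_1 = \mathfrak{g}$, giving the trivial decomposition claimed.

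Since $C_1 \subseteq C(\mathfrak{g})$, every bracket involving an element of $C_1$ vanishes, so $C_1$ is an abelian ideal, and by construction it is non-degenerate. The indecomposability criterion recorded in Section~2 then produces the orthogonal direct sum $\mathfrak{g} = C_1 \oplus C_1^\perp$ as ideals. Set $\mathfrak{g}_1 := C_1^\perp$; it remains to show $C(\mathfrak{g}_1)$ is isotropic.

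First I would establish the identification $C(\mathfrak{g}_1) = C(\mathfrak{g}) \cap \mathfrak{g}_1$: for $x \in \mathfrak{g}_1$, any bracket $[x, y_1, \cdots, y_{n-1}]$ with $y_i \in \mathfrak{g}$ reduces to $[x, b_1, \cdots, b_{n-1}]$, where $b_i$ is the $\mathfrak{g}_1$-component of $y_i$, because mixed brackets between the ideals $C_1$ and $\mathfrak{g}_1$ lie in $C_1 \cap \mathfrak{g}_1 = 0$. Hence $[x,\mathfrak{g}_1,\cdots,\mathfrak{g}_1]=0$ iff $[x,\mathfrak{g},\cdots,\mathfrak{g}]=0$. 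Next, for $x \in C(\mathfrak{g}) \cap C_1^\perp$, decomposing $x = c + r$ with $c \in C_1$, $r \in R$, the inclusion $R \subseteq C(\mathfrak{g})^\perp \subseteq C_1^\perp$ forces $B(c, C_1) = 0$, and non-degeneracy of $B|_{C_1}$ yields $c = 0$. Therefore $C(\mathfrak{g}_1) = R$, which is isotropic by definition.

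The only genuinely $n$-ary subtlety is the vanishing of mixed brackets in a direct sum of ideals, which uses that such a bracket lies in each summand separately and hence in their zero intersection; everything else is standard bilinear-form linear algebra, so I anticipate no serious obstacle beyond correctly bookkeeping the three auxiliary subspaces $C(\mathfrak{g})$, $C_1$, and $R$.
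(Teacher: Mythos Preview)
Your proof is correct and follows essentially the same line as the paper's (suppressed) argument: both split the center as $C(\mathfrak{g})=C_1\oplus\bigl(C(\mathfrak{g})\cap C(\mathfrak{g})^{\perp}\bigr)$, noting that $C(\mathfrak{g})^{\perp}=\mathfrak{g}^{1}$, and then pass to the orthogonal complement of the non-degenerate piece $C_1$. Your write-up is in fact slightly more careful, since you take $\mathfrak{g}_1=C_1^{\perp}$ explicitly (as is needed for the orthogonal decomposition in the statement) and verify $C(\mathfrak{g}_1)=R$ in detail, whereas the paper simply records the lemma as a well-known fact.
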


\begin{theorem} Let $({\mathfrak g}, B)$ be a metric $n$-Lie algebra, where $\dim {\mathfrak g}=n+k$ and $2\leq k\leq n+1.$
Then $\dim {\mathfrak g}^1\leq n+k-1.$ Furthermore if ${\mathfrak
g}$ is unsolvable, then ${\mathfrak g}$ is reductive, that is,
${\mathfrak g}={\mathfrak s}\oplus C({\mathfrak g})$, where
${\mathfrak s}$ is the semisimple ideal of ${\mathfrak g}$.
\end{theorem}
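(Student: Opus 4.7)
The plan is to prove the second (reductivity) assertion first; once that is in hand, the dimension bound $\dim {\mathfrak g}^1 \le n+k-1$ falls out almost for free. Assume ${\mathfrak g}$ is unsolvable. I would first write $({\mathfrak g}, B) = \bigoplus_i ({\mathfrak g}_i, B|_{{\mathfrak g}_i})$ as an orthogonal direct sum of indecomposable non-degenerate ideals, produced by iteratively splitting off proper non-degenerate ideals. I claim each indecomposable summand ${\mathfrak g}_i$ is either (a) a simple $n$-Lie algebra of dimension $n+1$, or (b) contains no strong semisimple ideal, so that Theorem 3.6 applies. The dichotomy rests on the non-degeneracy of any strong semisimple ideal $I \subseteq {\mathfrak g}_i$: $I \cap I^{\bot}$ is an isotropic ideal of ${\mathfrak g}_i$, hence abelian by Lemma 3.1(1); but as an ideal of the strong semisimple $I$ it must be a partial sum of the simple summands, which forces it to be zero. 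Indecomposability of ${\mathfrak g}_i$ then forces $I = {\mathfrak g}_i$, and a strong semisimple indecomposable algebra is simple, hence $(n+1)$-dimensional.

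Next I would rule out case (b) for unsolvable summands by dimension counting. In case (1) of Theorem 3.6, ${\mathfrak r}_i$ is isotropic and isomorphic to ${\mathfrak s}_i$ as an ${\mathfrak s}_i$-module, so $\dim {\mathfrak g}_i = 2 \dim {\mathfrak s}_i \ge 2(n+1)$. In case (2), $\dim {\mathfrak r}_i^{\bot} = \dim {\mathfrak s}_i \ge n+1$ by the representation-theoretic input already used in the proof of Theorem 3.7, and $V \ne 0$, giving $\dim {\mathfrak g}_i \ge 2(n+1)+1$. Either way $\dim {\mathfrak g}_i \ge 2(n+1) > 2n+1 \ge \dim {\mathfrak g}$, a contradiction. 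Hence every unsolvable indecomposable summand is simple of dimension $n+1$, and since two such summands already exceed $\dim {\mathfrak g}$, there is exactly one. Write ${\mathfrak g} = {\mathfrak s} \oplus {\mathfrak h}$ with ${\mathfrak s}$ this unique simple summand and ${\mathfrak h}$ an orthogonal solvable ideal of dimension $k-1 \le n$.

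Then I would verify that ${\mathfrak h}$ is abelian. If $\dim {\mathfrak h} < n$, every $n$-ary bracket on ${\mathfrak h}$ vanishes by skew-symmetry. If $\dim {\mathfrak h} = n$, fix a basis $h_1, \ldots, h_n$ and set $v = [h_1, \ldots, h_n]$; the invariance identity $(2.3)$ with $y_1 = y_2 = h_n$ gives $2 B(v, h_n) = 0$, and with $y_1 = h_n$, $y_2 = h_i$ ($i < n$) gives $B(v, h_i) = -B([h_1, \ldots, h_{n-1}, h_i], h_n) = 0$, so $v \perp {\mathfrak h}$ and $v = 0$ by non-degeneracy. Because ${\mathfrak s} \oplus {\mathfrak h}$ is an orthogonal direct sum of ideals, every bracket that mixes the two summands lies in ${\mathfrak s} \cap {\mathfrak h} = 0$, so ${\mathfrak h} \subseteq C({\mathfrak g})$; the reverse inclusion holds because the simple ${\mathfrak s}$ is centerless. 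Thus $C({\mathfrak g}) = {\mathfrak h}$ and ${\mathfrak g} = {\mathfrak s} \oplus C({\mathfrak g})$, which is the reductivity claim.

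Finally, the bound $\dim {\mathfrak g}^1 \le n+k-1$ is almost immediate: if ${\mathfrak g}$ is solvable then ${\mathfrak g}^1 \subsetneq {\mathfrak g}$, since otherwise the derived series would never descend; if ${\mathfrak g}$ is unsolvable, the reductivity just proved gives $\dim C({\mathfrak g}) = k-1 \ge 1$, and ${\mathfrak g}^1 = C({\mathfrak g})^{\bot}$ then yields $\dim {\mathfrak g}^1 = n+k - \dim C({\mathfrak g}) \le n+k-1$. I expect the main obstacle to be pinning down the non-degeneracy of strong semisimple ideals cleanly enough to establish the (a)/(b) dichotomy; the rest is a fairly direct application of Theorem 3.6 together with the invariance identity $(2.3)$.
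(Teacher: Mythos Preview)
Your argument is correct, but it follows a genuinely different route from the paper. The paper works directly with the Levi decomposition ${\mathfrak g}={\mathfrak s}\oplus{\mathfrak r}$: since $\dim{\mathfrak g}\le 2n+1$ forces the strong semisimple part ${\mathfrak s}$ to be a single simple summand of dimension $n+1$, the radical ${\mathfrak r}$ has dimension $k-1<n+1$, and the representation-theoretic input from \cite{20} (any ${\mathfrak s}$-module of dimension below $n+1$ is trivial) gives $[{\mathfrak s},\ldots,{\mathfrak s},{\mathfrak r}]=0$. Invariance then yields $B({\mathfrak s},{\mathfrak r})=0$, so ${\mathfrak s}={\mathfrak r}^\perp$ is already an ideal, and the paper finishes by showing ${\mathfrak r}$ is abelian via essentially the same invariance computation you carry out for ${\mathfrak h}$. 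Your route instead passes through the orthogonal decomposition into indecomposables, uses Lemma~3.1(1) to force any strong semisimple ideal to be non-degenerate (hence equal to the whole indecomposable piece), and then invokes Theorem~3.6 together with the $2(n+1)$ dimension bound to exclude unsolvable summands that are not simple. What the paper's approach buys is brevity and a single clean external citation; what yours buys is that it replaces that representation-theoretic black box by the paper's own structural Theorem~3.6, and it makes the mechanism behind the dimension constraint ($\dim{\mathfrak g}<2(n+1)$) more visibly the point.
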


\begin{proof} If ${\mathfrak g}$ is solvable, then ${\mathfrak g}^1=[{\mathfrak g}, \cdots, {\mathfrak g}]$ is a
proper ideal of ${\mathfrak g}$. It implies $\dim {\mathfrak
g}^1\leq n+k-1$.

Assume that ${\mathfrak g}$ is unsolvable. Let ${\mathfrak
g}={\mathfrak s}\oplus {\mathfrak r}$ be the Levi-decomposition,
where ${\mathfrak s}\neq 0$. Since $\dim {\mathfrak s}=n+1$,
${\mathfrak r}$ is a ${\mathfrak s}$-module with $\dim {\mathfrak
r}=k-1 <n+1$. Thanks to \cite{20}, $$[{\mathfrak s}, \cdots,
{\mathfrak s}, {\mathfrak r}]=0.$$
 Therefore,
 $B({\mathfrak s}, {\mathfrak r})=B([{\mathfrak s}, \cdots, {\mathfrak s}], {\mathfrak r})=B({\mathfrak s}, [{\mathfrak s}, \cdots, {\mathfrak s}, {\mathfrak
 r}])=0.$ It implies ${\mathfrak r}\subseteq {\mathfrak s}^{\bot}$. By $\dim {\mathfrak s}+\dim {\mathfrak s}^{\bot}=\dim {\mathfrak g}=\dim {\mathfrak s}+\dim
 {\mathfrak r}$,  ${\mathfrak s}={\mathfrak r}^{\bot}$ is an ideal
 of ${\mathfrak g}$ and $[{\mathfrak r}, {\mathfrak s}, {\mathfrak g}, \cdots, {\mathfrak g}]=0$.

Assume that $[{\mathfrak r}, \cdots, {\mathfrak r}]\neq 0$. Then
$\dim {\mathfrak r}=n$ and there is a basis $e_1, \cdots, e_n $ of
${\mathfrak r}$ such that $[e_1, \cdots, e_n]=e_1.$ Therefore
 $$B(e_1, e_i)=B([e_1, \cdots, e_n], e_i)=B(e_1, (-1)^{i}[e_2, \cdots, e_n, e_i, e_i])=0,
 ~\mbox{ for }~ i\neq 1,$$
  $$B(e_1, e_1)=B([e_1, \cdots, e_n], e_1)=B(e_n, [e_1, \cdots, e_{n-1},
 e_1])=0.$$
That is, $B(e_1, {\mathfrak g})=0.$ This contradicts the
non-degeneracy of $B$. Therefore ${\mathfrak r}=C({\mathfrak g})$.
It follows that ${\mathfrak g}$ is reductive and $\dim {\mathfrak
g}^1=n+1 \leq n+k-1.$
 \end{proof}

\begin{theorem} Let $({\mathfrak g}, B)$ be a non-abelian
$(n+k)$-dimensional metric $n$-Lie algebra with the isotropic
center, where $2\leq k\leq n+1.$ Then $\dim C({\mathfrak g})=k-1$.
\end{theorem}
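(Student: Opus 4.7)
Set $d=\dim C({\mathfrak g})$ and $N=\dim{\mathfrak g}/C({\mathfrak g})=n+k-d$. The plan is to establish both $d\leq k-1$ and $d\geq k-1$, after first reducing to the solvable, indecomposable case.

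\textbf{Solvability and the upper bound.} First I would show ${\mathfrak g}$ must be solvable. Otherwise Theorem 4.2 gives a reductive decomposition ${\mathfrak g}={\mathfrak s}\oplus C({\mathfrak g})$ with $\dim{\mathfrak s}=n+1$; since ${\mathfrak s}$ is perfect, invariance of $B$ forces $B({\mathfrak s},C({\mathfrak g}))=0$, so $B|_{C({\mathfrak g})}$ is non-degenerate, and together with isotropy this forces $C({\mathfrak g})=0$ and hence $k=1$, contradicting $k\geq 2$. For the upper bound, Lemma 3.1(3) gives $C({\mathfrak g})\subseteq{\mathfrak g}^1=C({\mathfrak g})^{\bot}$, so $\dim{\mathfrak g}^1=N$. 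Because the bracket vanishes whenever an argument lies in $C({\mathfrak g})$, it descends to a surjective alternating $n$-linear map $\Lambda^n({\mathfrak g}/C({\mathfrak g}))\to{\mathfrak g}^1$, forcing $N\leq\binom{N}{n}$. For $n\geq 2$ this yields $N\geq n+1$, i.e., $d\leq k-1$.

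\textbf{Indecomposability and the lower bound.} The core task is $d\geq k-1$, equivalently $\dim{\mathfrak g}^1\leq n+1$. I would first observe that ${\mathfrak g}$ is indecomposable: a nontrivial decomposition ${\mathfrak g}=J_1\oplus J_2$ into non-degenerate ideals would split $C({\mathfrak g})=C(J_1)\oplus C(J_2)$ with each $C(J_i)$ isotropic, forcing each $J_i$ non-abelian (else $J_i=C(J_i)$ would be both isotropic and non-degenerate, hence zero); a non-abelian $n$-Lie algebra has dimension at least $n+1$, giving $\dim{\mathfrak g}\geq 2(n+1)>n+k$, contradicting $k\leq n+1$. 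Now take an isomaximal ideal $I$ with $C({\mathfrak g})\subseteq I$ (via Zorn, using that isotropic ideals have dimension bounded by $(n+k)/2$). The isotropy inclusions and Lemma 2.1 give the chain $C({\mathfrak g})\subseteq I\subseteq I^{\bot}\subseteq{\mathfrak g}^1$. Theorem 3.4 applies (with ${\mathfrak r}={\mathfrak g}$ and ${\mathfrak r}\cap{\mathfrak r}^{\bot}=0$) and supplies that $I^{\bot}/I$ is abelian in ${\mathfrak g}/I$ together with the decomposition $I^{\bot}/I=[I^{\bot}/I,I^{\bot}/I,{\mathfrak g}/I,\ldots,{\mathfrak g}/I]\oplus C_{{\mathfrak g}/I}(I^{\bot}/I)$. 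Combining this decomposition with the chain above and the constraint $k\leq n+1$ should pin down $\dim{\mathfrak g}^1=n+1$.

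\textbf{Main obstacle.} The delicate step is the lower bound. Solvability, the upper bound, indecomposability, and the existence of the chain $C({\mathfrak g})\subseteq I\subseteq I^{\bot}\subseteq{\mathfrak g}^1$ are all routine. But extracting the sharp inequality $\dim{\mathfrak g}^1\leq n+1$ from the decomposition supplied by Theorem 3.4 is delicate, since the only immediate general bound (Theorem 4.2) is the weaker $\dim{\mathfrak g}^1\leq n+k-1$. The hypothesis $k\leq n+1$ must be used essentially here, and the argument presumably tracks the dimensions of the two summands of $I^{\bot}/I$ together with the bracket relations they satisfy inside ${\mathfrak g}/I$, so as to rule out the intermediate values $\dim{\mathfrak g}^1\in\{n+2,\ldots,n+k-1\}$.
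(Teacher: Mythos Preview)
Your upper bound $d\leq k-1$ is fine and essentially matches the paper's (terse) remark that $l<k$. The genuine gap is exactly where you flag it: you do not prove the lower bound, and the route you propose via Theorem~3.4 and isomaximal ideals is not the paper's argument, nor is it clear it can be pushed through. The decomposition of $I^{\bot}/I$ in Theorem~3.4(3) gives no obvious handle on $\dim\mathfrak g^1$ beyond what you already have, so ``presumably'' is doing all the work.

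The paper's proof of the lower bound is entirely different and rests on two ideas you are missing. First, it fixes a Witt-type basis $e_1,\ldots,e_l,e_{l+1},\ldots,e_{l+t},e'_1,\ldots,e'_l$ with $C(\mathfrak g)=\langle e_1,\ldots,e_l\rangle$, $\mathfrak g^1=\langle e_1,\ldots,e_{l+t}\rangle$, $B(e_r,e'_s)=\delta_{rs}$, $B(e_{l+i},e_{l+j})=\delta_{ij}$. The case $l=k-2$ is handled by a direct computation: invariance of $B$ forces enough linear relations among the brackets $[e_{l+1},\ldots,\widehat{e}_{l+i},\ldots,e_{l+t},e'_1,\ldots,e'_l]$ etc.\ that $\mathfrak g^1$ is spanned by fewer than $l+t$ of them, contradicting $\dim\mathfrak g^1=l+t$. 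Second, and this is the key construction, for $l<k-2$ (so $n-l\geq 2$) the paper builds an auxiliary $(n-l)$-Lie algebra $\mathfrak g_0$ on the same vector space by
\[
[x_1,\ldots,x_{n-l}]_0:=[x_1,\ldots,x_{n-l},e'_1,\ldots,e'_l],
\]
checks that $(\mathfrak g_0,B)$ is again metric, identifies $\mathfrak g_0^1=\langle e_{l+1},\ldots,e_{l+t}\rangle$ and $C(\mathfrak g_0)=\langle e_1,\ldots,e_l,e'_1,\ldots,e'_l\rangle$, so that $(\mathfrak g_0^1,B)$ is a $t$-dimensional \emph{perfect} metric $(n-l)$-Lie algebra, and then invokes Corollary~3.8 to obtain a contradiction. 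This reduction to a lower-arity perfect metric algebra is the decisive trick, and nothing in your outline points toward it.
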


\begin{proof} Suppose that $\dim C({\mathfrak g})=l$. Thanks to Theorem 4.1 and Lemma
3.1, there exists a basis $\{ e_1, \cdots, e_l,$ $ e_{l+1}, \cdots,
e_{l+t}, e'_1, \cdots, e'_l \}$ of ${\mathfrak g}$ such that the
non-zero products are
$$
B(e_r, e'_s)=\delta_{rs},\quad B(e_{l+i}, e_{l+j})=\delta_{ij}.
\eqno(4.2)
$$
where $1\leq  r, s\leq l, ~ 1\leq i, j\leq t,$ $C({\mathfrak
g})=\langle e_1, \cdots, e_l \rangle$, and ${\mathfrak g}^1=\langle
e_1, \cdots, e_{l+t} \rangle$. By $\dim C({\mathfrak g})+\dim
{\mathfrak g}^1=\dim {\mathfrak g}=n+k,$ we have $l<k.$

If $l=k-2$, then $n+k-l=n+2$, thanks to eqs. (2.3) and (4.2), the
non-zero brackets of basis vectors are
$$C_{i,j}^{1}=[e_{l+1}, \cdots, \widehat{e}_{l+i}, \cdots,
\widehat{e}_{l+j}, \cdots, e_{l+t}, e'_{1}, \cdots, e'_{l}]=\mu_{i,
j}^{1, i}e_{l+i}+\mu_{i, j}^{1, j}e_{l+j}, $$
$$  C_{i,r}^{2}=[e_{l+1}, \cdots, \widehat{e}_{l+i}, \cdots,
e_{l+t}, e'_{1}, \cdots, \widehat{e}'_{r}, \cdots,
e_{l}]=\lambda_{i, r}^{2, r}e_{r}+\mu_{i, r}^{2, i}e_{l+i}, $$
$$ ~ C_{r,s}^{3}=[e_{l+1}, \cdots, e_{l+t}, e'_{1}, \cdots,
\widehat{e}'_{r}, \cdots, \widehat{e}'_{s}, \cdots,
e'_{l}]=\lambda_{r, s}^{3, r}e_{r}+\lambda_{r, s}^{3, s}e_{s}, $$
where $1\leq i, j\leq t, ~ 1\leq r, s\leq l.$ Furthermore for $1\leq
i\neq j\leq t, ~ 1\leq r\neq s\leq l$, we obtain
$$ \mu_{i, j}^{1, i}=(-1)^{t-r-i-1}\lambda_{j, r}^{2, r}, \quad \mu_{i, j}^{1, j}=(-1)^{t-r-j}\lambda_{i, r}^{2, r}, \quad \mu_{i,
r}^{2, i}=(-1)^{-j-i}\mu_{j, r}^{2, j},$$
$$ \lambda_{r, s}^{3,
r}=(-1)^{t-r-i}\mu_{i, s}^{2, i}, \quad \lambda_{r, s}^{3,
s}=(-1)^{t-s-i-1}\mu_{i, r}^{2, i}, \quad\lambda_{i, r}^{2,
r}=(-1)^{-s-r}\lambda_{i, s}^{2, s}. $$
For $1\leq i\neq j\leq t, 1\leq r\neq s\leq l,$
$$
(-1)^{t-r-i}\mu_{i, r}^{2, i}C_{i, j}^{1}=\lambda_{j, r}^{2, r}C_{i,
r}^{2}-\lambda_{i, r}^{2, r}C_{j, r}^{2},\quad
(-1)^{t-r-i}\lambda_{i, r}^{2, r}C_{r, s}^{3}=\mu_{i, s}^{2, i}C_{i,
r}^{2}-\mu_{i, r}^{2, i}C_{i, s}^{2}, $$
$$ (-1)^{s}\mu_{i, s}^{2,
i}\lambda_{j, r}^{2, r}C_{i, r}^{2}+(-1)^{r}\mu_{i, r}^{2,
i}\lambda_{i, s}^{2, s}C_{j, s}^{2}=(-1)^{r}\mu_{i, r}^{2,
i}\lambda_{j, s}^{2, s}C_{i, s}^{2}+(-1)^{s}\mu_{i, s}^{2,
i}\lambda_{i, r}^{2, r}C_{j, r}^{2}. $$ It follows that ${\mathfrak
g}^1$ is spanned by $C_{1, 1}^{2}, C_{1, 2}^{2}, \cdots, C_{1,
l}^{2}, C_{2, l}^{2}, \cdots, C_{t, l}^{2}$. It contradicts $\dim
 {{\mathfrak g}}^{1}=l+t$. Therefore, $l \neq k-2$.

If $l< k-2,$  then $l < n-1$, that is,  $n-l \geq 2$. We obtain an
$(n-l)$-Lie algebra ${\mathfrak g}_0={\mathfrak g}$ (as vector
spaces) with the multiplication given as follows
$$[x_{1}, \cdots, x_{n-l}]_{0}=[x_{1},
\cdots, x_{n-l}, e'_{1}, \cdots, e'_{l}], \quad \forall x_{1},
\cdots, x_{n-l}\in {{\mathfrak g}}_0.\eqno(4.3)$$ It is easy to
check $$B([x_{1}, \cdots, x_{n-l-1}, z]_0, y) = - B([x_{1}, \cdots,
x_{n-l-1}, y]_0, z)$$ for any $x_{1}, \cdots, x_{n-l-1}, y, z\in
{\mathfrak g}_0$. It follows that $({\mathfrak g}_0, B)$ is a metric
$(n-l)$-Lie algebra.

We claim ${\mathfrak g}^1_0=\langle e_{l+1}, \cdots, e_{l+t}
\rangle$ and $C({\mathfrak g}_0)=\langle  e_1, \cdots, e_l, e'_1,
\cdots, e'_{l} \rangle$.

In fact, set $[x_{1}, \cdots,
x_{n-l}]_0=\sum_{r=1}^{l}a_{r}e_{r}+\sum_{j=1}^{t}b_{j}e_{l+j}$,
where $x_{1}, \cdots, x_{n-l}\in {{\mathfrak g}}_0.$
 Thanks to the identity (4.2),  for  $1\leq s\leq l$
$$a_s=B(a_s e_{s}, e'_{s})=B( [x_{1}, \cdots, x_{n-l}]_0 -\sum_{r=1, r\neq s}^{l}a_{r}e_{r}-\sum_{j=1}^{t}b_{j}e_{l+j},
~e'_{s} )=0.
$$
Then ${\mathfrak g}_{0}={C}({\mathfrak g}_{0})\oplus {\mathfrak
g}_{0}^{1}$. Thus $({\mathfrak g}_{0}^{1}, B)$ is a $t$-dimensional
perfect metric $(n-l)$-Lie algebra. By Corollary 3.8, $\dim
{\mathfrak g}_{0}^{1}\geq m(n-l+1)$ for some positive integer $m$.
It is a contradiction.
Therefore, $l=k-1$.
\end{proof}

\begin{theorem} Let $({\mathfrak g}, B)$ be a non-abelian $(n+k)$-dimensional
metric $n$-Lie algebra, where $2\leq k\leq n+1$ and the center is
isotropic. Then there exists a basis $e_{1}, \cdots, e_{k-1},$ $
e_{k}, \cdots, e_{n+1},$ $ e_{n+2}, $ $\cdots,$ $ e_{n+k}$ of
${\mathfrak g}$ such that $C({\mathfrak g})=\langle e_{1}, $
$\cdots, $ $e_{k-1} \rangle, $ ${\mathfrak g}^1=\langle e_{1},
\cdots,$ $ e_{k-1}, e_{k}, $ $\cdots, $ $e_{n+1}\rangle $, and
$$B(e_{r}, e_{n+1+s})=\delta_{rs},\quad B(e_{i}, e_{j})=\delta_{ij}, \eqno(4.4)
$$ where $~1\leq r, s\leq k-1; ~k\leq i, j\leq n+1,$   the multiplication table
under the basis is
$$(1) ~\begin{array}{l}
 \left\{\begin{array}{l}
{[}e_{k}, \cdots, \widehat{e}_{i}, \cdots, e_{n+1}, e_{n+2}, \cdots,
e_{n+k}]=(-1)^{n+i}ae_{i}, \quad k\leq i\leq n+1,  \\
 {[}e_{k}, \cdots, e_{n+1}, e_{n+2}, \cdots, \widehat{e}_{n+1+r},
\cdots, e_{n+k}]=(-1)^{r+1}ae_{r}, \quad 1\leq r\leq k-1,
\end{array}\right.
\end{array}
$$
where $a\in {\mathbb C}$ and $a\neq 0.$
\end{theorem}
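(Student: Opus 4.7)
The plan is to combine Theorem 4.3 with standard Witt-type linear algebra to produce the adapted basis, and then extract every non-zero bracket from a single scalar by one application of the invariance (2.3).

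By Theorem 4.3 one has $\dim C({\mathfrak g})=k-1$. Lemma 3.1(3) gives $C({\mathfrak g})\subseteq{\mathfrak g}^1$, and since ${\mathfrak g}^1=C({\mathfrak g})^{\bot}$ one obtains $\dim{\mathfrak g}^1=n+1$ together with the fact that the induced form on ${\mathfrak g}^1/C({\mathfrak g})$ is non-degenerate. I would choose an orthonormal lift $e_k,\cdots,e_{n+1}$ to a complement of $C({\mathfrak g})$ in ${\mathfrak g}^1$ and, by standard hyperbolic completion, pick $e_{n+2},\cdots,e_{n+k}$ inside a dual isotropic complement of ${\mathfrak g}^1$ so that (4.4) holds.

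Because $e_1,\cdots,e_{k-1}$ are central, every basic $n$-bracket containing one of them vanishes. The only possibly non-zero basic brackets are therefore the $n+1$ brackets obtained by omitting exactly one of $e_k,\cdots,e_{n+k}$; these are the $C_i=[e_k,\cdots,\widehat{e}_i,\cdots,e_{n+1},e_{n+2},\cdots,e_{n+k}]$ for $k\leq i\leq n+1$ and $D_r=[e_k,\cdots,e_{n+1},e_{n+2},\cdots,\widehat{e}_{n+1+r},\cdots,e_{n+k}]$ for $1\leq r\leq k-1$. Each lies in ${\mathfrak g}^1=\langle e_1,\cdots,e_{n+1}\rangle$; combining (4.4) with the immediate consequence $B([x_1,\cdots,x_n],x_j)=0$ of (2.3) (valid whenever $x_j$ appears among $x_1,\cdots,x_n$), I conclude $C_i\in{\mathbb C}\,e_i$ and $D_r\in{\mathbb C}\,e_r$ (the latter because $e_{n+1+r}$ is the unique basis vector in $\{e_{n+2},\cdots,e_{n+k}\}$ absent from $D_r$, so only its pairing with $D_r$ can survive).

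To reduce all coefficients to a single scalar, set $\alpha:=B(D_{k-1},e_{n+k})$, so that $D_{k-1}=\alpha\,e_{k-1}$ tautologically. For the other $C_i$ and for $D_r$ with $r<k-1$, I would apply (2.3) once to swap the final entry $e_{n+k}$ out of the bracket while inserting $e_i$ or $e_{n+1+r}$, respectively; a routine transposition count to restore canonical order in the resulting bracket yields $B(C_i,e_i)=(-1)^{n+k-i}\alpha$ and $B(D_r,e_{n+1+r})=(-1)^{k-1-r}\alpha$. Setting $a:=(-1)^k\alpha$ converts these into the signs $(-1)^{n+i}$ and $(-1)^{r+1}$ asserted in (1), the case $r=k-1$ reducing tautologically to $\alpha=(-1)^k a$ as well. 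Finally $a\neq 0$, for otherwise every basic bracket would vanish, contradicting $\dim{\mathfrak g}^1=n+1>0$. I expect the only real difficulty to be the sign bookkeeping in these transposition counts; everything else follows directly from Sections 2--3 together with Theorem 4.3.
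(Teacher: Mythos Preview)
Your argument is correct and follows essentially the same approach as the paper: take the adapted basis from (the proof of) Theorem 4.3 satisfying (4.4), note that the only possibly non-zero basic brackets are those omitting one of $e_k,\ldots,e_{n+k}$, and then use the invariance (2.3) together with (4.4) to reduce all coefficients to a single nonzero scalar. The paper merely says ``the theorem follows from the direct computation according to eqs.~(2.3) and (4.4)''; your write-up makes that computation explicit, and your sign bookkeeping checks out.
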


\begin{proof} By the proof of Theorem 4.3, there exists a basis $
e_{1}, \cdots, e_{k-1},$ $ e_{k}, \cdots, e_{n+1},$ $ e_{n+2}, $
$\cdots,$ $ e_{n+k}$ of ${\mathfrak g}$ satisfying the identity
(4.4), where $C({\mathfrak g})=\langle e_{1}, $ $\cdots, $
$e_{k-1}\rangle, $ ${\mathfrak g}^{1}=\langle e_{1}, \cdots,$ $
e_{k-1}, e_{k}, $ $\cdots, $ $e_{n+1} \rangle$. Let
$$[e_{k}, \cdots, \widehat{e}_{q}, \cdots,
e_{n+k}]=\sum_{r=1}^{k-1}a_{qr}e_r+\sum_{i=k}^{n+1}b_{qi}e_{i},
\quad k\leq q\leq n+k.$$ The theorem follows from the direct
computation according to eqs. (2.3) and (4.4).
\end{proof}

\begin{theorem} Let $({\mathfrak g}, B)$ be a non-abelian $(n+k)$-dimensional
metric $n$-Lie algebra, where $2\leq k\leq n+1$. Then $({\mathfrak
g}, B)$ is one of the following cases up to isomorphisms.
\begin{enumerate}
\item If the center of ${\mathfrak g}$ is isotropic, then ${\mathfrak g}$ is
the case (1) in Theorem 4.4.
\item
Assume that the center of ${\mathfrak g}$ is non-isotropic. If
${\mathfrak g}$ is reductive, there exists a basis $x_{1}, \cdots, $
$x_{k-1},$ $ e_{1}, \cdots,$ $ \cdots,~e_{n+1}$ of ${\mathfrak g}$,
satisfying
$$\begin{array}{l}
 \left\{\begin{array}{l}
 B(x_p, x_q)=\delta_{pq}, ~1\leq p, q\leq k-1;\\
 B(x_p, e_i)=0,~ 1\leq p\leq k-1, ~1\leq i\leq n+1;\\
  B(e_i,
e_j)=\delta_{ij},  1\leq i, j\leq n+1,
\end{array}\right.
\end{array}
\eqno(4.5)
$$  and the multiplication table is given as follows:
$$ (2)~ [e_{1}, \cdots,
\widehat{e}_{r}, \cdots, e_{n+1}]=(-1)^{r+1} c e_{r},$$ where $c\in
{\mathbb C}, c\neq 0, 1\leq r\leq n+1$.

If ${\mathfrak g}$ is non-reductive, that is, $C({\mathfrak g})\cap
{\mathfrak g}^1\neq 0$. Then there exists a basis $x_{1}, \cdots,
x_{l},$ $ e_{1}, \cdots,$ $ e_{k_1},$ $ \cdots,~$ $e_{n+1}$ $,
\cdots, $ $ e_{n+k-l }$ of ${\mathfrak g}$ satisfying $C({\mathfrak
g})=\langle x_{1}, \cdots, x_{l}, e_{1}, \cdots, e_{k_1-1} \rangle$,
${\mathfrak g}^1=\langle e_{1}, \cdots, e_{k_1-1}, e_{k_1}, \cdots,
e_{n+1} \rangle,$
$$
\begin{array}{l}
 \left\{\begin{array}{l}
B(x_{p}, x_{q})=\delta_{pq}, ~1\leq p, q\leq l;\\
B(x_p, e_m)=0,~  ~1\leq p\leq l,~ 1\leq m\leq n+k-l;\\
 B(e_{r}, e_{s})=\delta_{rs},~~k_1\leq r, s\leq n+1;\\
B(e_{i}, e_{n+1+j})=\delta_{ij},~~ 1\leq i,
j\leq k_1-1, \\
B(e_{n+1+i}, e_{n+1+j})=0,~~ 1\leq i, j\leq k_1-1,
\end{array}\right.
\end{array}
\eqno(4.6)$$ and the multiplication table is as follows
$$ (3)~\begin{array}{l}
 \left\{\begin{array}{l}
{[}e_{k_1}, \cdots, \widehat{e}_{i}, \cdots, e_{n+1}, e_{n+2},
\cdots,
e_{n+k-l}]=(-1)^{n+i}~a~ e_{i}, ~~~~ k_1\leq i\leq n+1;  \\
 {[}e_{k_1}, \cdots, e_{n+1}, e_{n+2}, \cdots, \widehat{e}_{n+1+r},
\cdots, e_{n+k-l}]=(-1)^{r+1}~a~ e_{r}, ~~~~1\leq r\leq k_1-1,
\end{array}\right.
\end{array}
$$
where $1\leq l < k-1, k_1=\dim (C({\mathfrak g})\cap {\mathfrak
g}^1)=k-l-1\geq 1$, $a\in {\mathbb C}, a\neq 0.$
\end{enumerate}
\end{theorem}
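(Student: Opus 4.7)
The plan is to partition the argument into three mutually exclusive cases according to whether $C({\mathfrak g})$ is isotropic and, if not, whether ${\mathfrak g}$ is unsolvable (equivalently, by Theorem 4.2, reductive). First I would note that $C({\mathfrak g})\neq 0$: Corollary 3.8 together with $n+k<2(n+1)$ precludes ${\mathfrak g}$ from being perfect. Case (1) of the theorem, where $C({\mathfrak g})$ is isotropic, then follows directly from Theorem 4.4.

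In case (2), with $C({\mathfrak g})$ non-isotropic and ${\mathfrak g}$ unsolvable, Theorem 4.2 yields ${\mathfrak g}={\mathfrak s}\oplus C({\mathfrak g})$. The dimension constraint $n+k<2(n+1)$ and Corollary 3.8 force $\dim {\mathfrak s}=n+1$ (so that ${\mathfrak s}$ is the unique simple $n$-Lie algebra) and hence $\dim C({\mathfrak g})=k-1$. Invariance gives $B({\mathfrak s},C({\mathfrak g}))=B([{\mathfrak s},\ldots,{\mathfrak s}],C({\mathfrak g}))=0$, so $B$ restricts non-degenerately to each summand. In the standard basis of ${\mathfrak s}$ satisfying $[e_{1},\ldots,\widehat{e}_{r},\ldots,e_{n+1}]=(-1)^{r+1}ce_{r}$, invariance arguments analogous to those in the proof of Theorem 3.7 force $B|_{{\mathfrak s}}$ to be a scalar multiple of the identity. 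Rescaling, together with an orthonormal basis of $C({\mathfrak g})$, gives (4.5) and the bracket of case (2).

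In case (3), with $C({\mathfrak g})$ non-isotropic and ${\mathfrak g}$ solvable, I would apply Lemma 4.1 to write ${\mathfrak g}=C_{1}\oplus {\mathfrak g}_{1}$ as an orthogonal direct sum of ideals, with $C_{1}\neq 0$ non-degenerate abelian and $C({\mathfrak g}_{1})$ isotropic; set $l=\dim C_{1}$. Since ${\mathfrak g}$ is non-abelian so is ${\mathfrak g}_{1}$, and the equivalence $C({\mathfrak g})\cap {\mathfrak g}^{1}=C({\mathfrak g}_{1})$ (using ${\mathfrak g}_{1}\supseteq {\mathfrak g}^{1}$) converts the non-reductivity hypothesis into $C({\mathfrak g}_{1})\neq 0$, forcing $k-l\geq 2$. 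Theorem 4.4 applied to ${\mathfrak g}_{1}$, with parameter $k-l$ in place of $k$, then furnishes the required $e$-basis and multiplication table; adjoining an orthonormal basis $x_{1},\ldots,x_{l}$ of $C_{1}$ yields the stated presentation with $k_{1}=k-l-1$.

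The main obstacle is the reductive case: determining the form of the metric on the simple $(n+1)$-dimensional $n$-Lie algebra. This reduces to iterated application of the invariance identity (2.3), mirroring the computation at the end of the proof of Theorem 3.7, to conclude that off-diagonal entries vanish and diagonal entries coincide. The remaining bookkeeping consists of the dimensional checks $k-l\geq 2$ and $k_{1}\leq n$ needed to invoke Theorem 4.4 inside case (3); both are automatic from $k\leq n+1$ together with $C({\mathfrak g}_{1})\neq 0$.
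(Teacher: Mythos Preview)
Your proposal is correct and follows essentially the same route as the paper's proof: case (1) is Theorem 4.4 verbatim, case (2) is handled via Theorem 4.2 together with a direct invariance computation on the simple factor (what the paper abbreviates as ``direct computation by \cite{1,19} and identity (2.3)''), and case (3) is Lemma 4.1 followed by Theorem 4.4 applied to the factor ${\mathfrak g}_1$. Your write-up is in fact more explicit than the paper's, which merely cites Lemma 4.1 and Theorem 4.3 for case (3); one small remark is that the non-perfection of ${\mathfrak g}$ (hence $C({\mathfrak g})\neq 0$) follows more directly from Theorem 4.2 ($\dim{\mathfrak g}^1\leq n+k-1$) together with Lemma 3.1(4) than from Corollary 3.8.
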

\begin{proof}  The result (1) follows the Theorem 4.4. The case
(2) and identity (4.5) follow from the direct computation by
\cite{1,19} and the identity (2.3). Lastly assume that the center of
${\mathfrak g}$ is non-isotropic and ${\mathfrak g}$ is
non-reductive. The theorem follows from Lemma 4.1 and Theorem
4.3.\end{proof}

Summarizing above result, we obtain the following result.

\begin{coro} Let ${\mathfrak g}$ be a
$(n+k)$-dimensional metric $n$-Lie algebra with $2\leq k\leq n+1$.
Then $\dim {\mathfrak g}^1=0$ or $\dim {\mathfrak g}^1=n+1.$
\end{coro}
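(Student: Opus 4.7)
The plan is to derive the dichotomy directly from the classification established in Theorem 4.5, by inspecting the dimension of $\mathfrak{g}^1$ in each listed case.

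First I would dispose of the trivial case: if $\mathfrak{g}$ is abelian, then by definition $[\mathfrak{g},\ldots,\mathfrak{g}]=0$, so $\dim\mathfrak{g}^1=0$. Hence for the remainder of the argument I can assume $\mathfrak{g}$ is non-abelian and apply Theorem 4.5.

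Next I would go through the three cases of Theorem 4.5 one at a time. In case (1) (isotropic center), Theorem 4.4 together with Theorem 4.3 produces a basis in which $\mathfrak{g}^1=\langle e_1,\ldots,e_{k-1},e_k,\ldots,e_{n+1}\rangle$, so $\dim\mathfrak{g}^1=n+1$; one just reads this off from the multiplication table in (1). In case (2) (reductive with non-isotropic center), $\mathfrak{g}=\mathfrak{s}\oplus C(\mathfrak{g})$ with $\mathfrak{s}$ simple of dimension $n+1$, and since $\mathfrak{s}$ is perfect and $[C(\mathfrak{g}),\mathfrak{g},\ldots,\mathfrak{g}]=0$, we get $\mathfrak{g}^1=\mathfrak{s}$, hence $\dim\mathfrak{g}^1=n+1$. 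In case (3) (non-reductive with non-isotropic center), the basis described in Theorem 4.5 gives $\mathfrak{g}^1=\langle e_1,\ldots,e_{k_1-1},e_{k_1},\ldots,e_{n+1}\rangle$, whose dimension is exactly $n+1$.

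Combining the three cases, every non-abelian $(n+k)$-dimensional metric $n$-Lie algebra with $2\leq k\leq n+1$ satisfies $\dim\mathfrak{g}^1=n+1$, and the abelian case gives $\dim\mathfrak{g}^1=0$. The only potential difficulty is making sure that the descriptions in Theorem 4.5 really exhaust all possibilities (in particular, that no non-abelian case has been overlooked, e.g. the perfect semisimple case is subsumed by case (2) with $C(\mathfrak{g})=0$ and $k=2$), but this is guaranteed by Theorem 4.5 itself. The proof is therefore essentially a bookkeeping argument, short and immediate from the preceding classification.
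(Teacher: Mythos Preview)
Your proof is correct and follows exactly the paper's approach: the corollary is stated immediately after Theorem 4.5 with the preamble ``Summarizing above result, we obtain the following result,'' and no further argument is given. One small slip in your parenthetical: the perfect semisimple case (zero center) forces $\dim\mathfrak{g}=n+1$, i.e.\ $k=1$, so it is excluded by the hypothesis $k\geq 2$ rather than subsumed by case~(2) with $k=2$; this does not affect the argument.
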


\section*{Acknowledgments}

The first and second authors are partially supported by NSF of China
(10871192) and NSF of Hebei Province, China (A2010000194).

\bibliography{}

\end{document}